\newcommand{\E}{\mathbb{E}}
\renewcommand{\P}{\mathcal{P}}
\newcommand{\fui}{\textsc{Full Revelation}}
\newcommand{\ber}{\mathtt{Bernoulli}}
\newcommand{\sgm}{\textsc{Single Mean}}
\newcommand{\iss}{independent signaling scheme}
\newcommand{\ISS}{Independent Signaling Scheme}
\newcommand{\sgp}{signaling policy}
\newcommand{\sgps}{signaling policies}
\newcommand{\SGP}{Signaling Policy}
\newcommand{\SGPs}{Signaling Policies}
\newcommand{\mpr}{mapping rule}
\newcommand{\slr}{selection rule}
\newtheorem{theorem}{Theorem}[section]
\newtheorem{lemma}[theorem]{Lemma}
\newtheorem{proposition}[theorem]{Proposition}
\theoremstyle{definition}
\newtheorem{definition}[theorem]{Definition}
\newtheorem{example}[theorem]{Example}
\newtheorem*{remark}{Remark}
\DeclareMathOperator*{\argmax}{arg\,max}
\renewcommand{\d}{\,\mathrm{d}}
\newcommand{\pb}{\mu}
\newcommand{\pw}{w}
\crefname{Program}{Program}{Programs}
\renewcommand{\epsilon}{\varepsilon}
\def\@fnsymbol#1{\ensuremath{\ifcase#1\or \dagger\or \ddagger\or
   \mathsection\or \mathparagraph\or \|\or **\or \dagger\dagger
   \or \ddagger\ddagger \else\@ctrerr\fi}}
\newcommand*\samethanks[1][\value{footnote}]{\footnotemark[#1]}
\definecolor{linkc}{rgb}{0.1, 0.5, 0.7}
\definecolor{citec}{rgb}{0.6, 0.3, 0.7}
\definecolor{urlc}{rgb}{0.5, 0.1, 0.2}
\title{Majorized Bayesian Persuasion and Fair Selection}
\author{Siddhartha Banerjee\thanks{School of Operations Research and Information Engineering, Cornell University, Ithaca, NY 14853. Email: \textsf{sbanerjee@cornell.edu}. Supported by NSF grants ECCS-1847393, CNS-195599 and AFOSR grant FA9550-23-1-0068.} \and Kamesh Munagala\thanks{Department of Computer Science, Duke University, Durham, NC 27708-0129. Emails: \textsf{kamesh@cs.duke.edu}, \textsf{yiheng.shen@duke.edu}. Supported by NSF grants CCF-2113798 and IIS-2402823.} \and Yiheng Shen\samethanks[2] \and Kangning Wang\thanks{Department of Computer Science, Rutgers University, Piscataway, NJ 08854. Email: \textsf{kn.w@rutgers.edu}.}}
\date{}
\begin{document}

\maketitle

\begin{abstract}
We address the fundamental problem of selection under uncertainty by modeling it from the perspective of Bayesian persuasion. In our model, a decision maker with imperfect information always selects the option with the highest expected value. We seek to achieve fairness among the options by revealing additional information to the decision maker and hence influencing its subsequent selection. To measure fairness, we adopt the notion of majorization, aiming at simultaneously approximately maximizing all symmetric, monotone, concave functions over the utilities of the options. As our main result, we design a novel information revelation policy that achieves a logarithmic-approximation to majorization in polynomial time. On the other hand, no policy, regardless of its running time, can achieve a constant-approximation to majorization. Our work is the first non-trivial majorization result in the Bayesian persuasion literature with multi-dimensional information sets.
\end{abstract}


\vspace{4pt}
\makeatletter
\renewcommand\tableofcontents{%
    \@starttoc{toc}%
}
\makeatother
\tableofcontents

\thispagestyle{empty}
\newpage
\setcounter{page}{1}
\section{Introduction}

Selection problems arise in hiring decisions, college admissions, ad auctions, and many other scenarios. Typically, a selection process involves defining an \emph{explainable} selection rule that solves an optimization problem over the agents (e.g., applicants or advertisers). Perhaps the simplest and most common rule is to choose the best or most qualified agent(s) -- for example, many countries use standardized test scores for college admissions. Under natural simplifying assumptions, such a meritocratic rule will maximize the utilitarian social welfare, and can be perceived as fair. Nevertheless, its outcomes may not always align with other fairness considerations. For instance, one can adopt the view of max-min fairness and aim at helping the worst-off agent.

The meritocratic rule has a perhaps more subtle issue: when trying to evaluate the agents and define the ``best'' one, the quality measurements may be imperfect. Indeed, the decision maker uses observable information or features of the agents to reach a decision, but counterfactually, a different decision could have been made in the presence of more certain or complete information. This issue can have disproportionate negative impacts on certain demographic groups and therefore undermine fairness perceptions towards the rule. To address the issue, there has been a long line of recent research focusing on modeling this uncertainty and constructing randomized selection policies to achieve fair outcomes~\cite{kleinberg2018selection,celis2020interventions,singh2021fairness,shen2023fairness,devic2024stability}. As a practical example, in many hiring situations, the interviewers are required to interview at least a certain number of candidates from prespecified demographic groups, as an attempt to mitigate the impact of such uncertainty and to achieve fairness.



\paragraph{Fairness via Information Revelation.} We diverge from this line of research and take a view inspired by the influential Bayesian persuasion literature, starting with the seminal work of Kamenica and Gentzkow~\cite{Kamenica2011bayesian}. We ask:

\begin{quote}
Given uncertainty about agent qualities, how much additional information should be gleaned and revealed to the decision maker (either by the agents themselves or by a central entity), so that even if the decision maker sticks to ``selecting the best'', it still yields a fair outcome to the agents?
\end{quote}

To approach this high-level question, we use the following model. (See Au and Kawai \cite{AuKawai} for a related model.) Each agent has an uncertain scalar quality, and its true quality is known to an entity -- either the agent itself or an intermediary. This entity then signals additional information independently for each agent to the decision maker to refine its uncertainty via Bayes' rule. Analogous to~\cite{AuKawai}, this signaling process can be done in a distributed fashion by the agents themselves, a property that is often desirable for the applications we consider in order to preserve privacy of agents' true quality. Subsequently, it can happen that several agents have comparable posterior quality in the eyes of the decision maker, and a fair selection can be made by the decision maker without significantly sacrificing its own optimality notion of selecting the posterior best. 

As a concrete motivation, consider designing a standardized form for admission or hiring. The form is the signaling scheme - it is designed centrally by the decision maker, filled separately by agents, and the decision maker can break ties based on information revealed in the forms, sometimes in conjunction with a lottery number assigned to each agent, to ensure fairness in selection. We present the formal model in \cref{sec:prelim}, and an example in \cref{eg:maxmin}. 

In order to define fairness, we consider the vector of expected utilities received by the agents, where each agent receives utility equal to its true value if it is selected. We seek signaling (or information revelation) policies whose utility vector is \emph{approximately majorized}~\cite{hardy1934inequalities,goel2005approximate,kumar2006fairness,chakrabarty2019approximation}, meaning that all symmetric, monotone, concave functions over the utilities are simultaneously approximately maximized by the same policy. Such fairness functions capture, for instance, max-min fairness (maximizing the minimum utility) and the Nash welfare (maximizing the geometric mean of the utilities).

At one extreme, if the agents do not reveal any additional information, the agent with the highest expected quality, according to the decision maker's prior, is selected. As mentioned before, this can be problematic in terms of fairness to counterfactual information. We show in \cref{ex:full2} that the other extreme -- where each agent reveals its exact quality (\fui{}) -- may be far from fair as well, even when the qualities only take ``high'', ``medium'', and ``low'' values. This example highlights the need to carefully choose which information to reveal.

We note that such information revelation considerations extend beyond the motivating settings described above. As an example, consider a government agency (the decision maker) that has fixed rules for allocating money designated to a social welfare cause such as refugee resettlement. Local agencies typically are better informed about welfare needs of the groups they serve, and can selectively reveal information in order to facilitate fairness for these groups. 

\paragraph{Summary of Results.} 
We present a detailed summary of our results in \cref{sec:prelim} after we present our formal model. Informally, our main result is a novel set of signaling (or information revelation) mechanisms that achieve approximate majorization. (See \cref{thm:approx}.) The key technical hurdle with designing such policies is the behavior of the decision maker -- this decision maker chooses the best (or approximately best) agent given its information. For any given fairness objective (such as max-min fairness), a general algorithmic result due to Dughmi and Xu~\cite{dughmi2016algorithmic} yields an FPTAS for any Bayesian persuasion problem with arbitrary information sets, assuming that the sender can correlate agent signals. 
However, such a generic approach does not shed light on the existence of policies that \emph{simultaneously} approximate any fairness objective, and this fact requires us to unearth specific structure in our problem.


Our main contribution in \cref{sec:approx} is in unearthing such structure for the selection problem via a novel class of signaling policies with simple structure, leading to positive {\em existence} and {\em computational} results for approximately majorized policies. We term the novel set of policies as \sgm{}, and these generate, in a randomized fashion, information signals where the posterior inferred by the decision maker maps the agents' values to a common mean quality. This mapping to the common mean happens with as large a probability as possible, hence giving the decision maker wide leeway in implementing a fair selection even when it ``goes with the best''. Though this policy sounds intuitive, its analysis is far from obvious as we discuss below. 
Informally, the final result is a polynomial-time-computable and $O(\log V)$-approximately majorized policy\footnote{Our results are bicriteria, and assume the societal decision maker is a $(1+\epsilon)$-approximate welfare maximizer.}, where $V$ is the ratio of the largest to smallest quality value. This means for any fairness function, the policy yields an $O(\log V)$-approximation in polynomial time.  

As a corollary, our work also presents an approximation algorithm for any given fairness function when agents generate signals independently of other agents. As shown in \cref{sec:asym}, our results easily extend to handle the case where the agents' utility (on which we seek to be fair) is different from their quality (whose posterior is used by the social planner to perform allocation). We note that the FPTAS for general Bayesian persuasion in~\cite{dughmi2016algorithmic} assumes a more powerful intermediary that can send a common signal using the information of all agents.\footnote{They also present an independent signaling scheme when the actions have i.i.d. rewards. In our case, the i.i.d. setting is uninteresting since revealing full information about agent quality is trivially $1$-majorized.} Our work shows that for the weaker and practically motivated  intermediary that generates per-agent signals independently of other agents, there is still a polynomial-time-computable $O(\log V)$-approximation. 

In \cref{sec:main_lb}, we complement this positive result by showing that a mild dependence on $V$ is unavoidable -- no signaling scheme can be $(\log \log V) / 3$-approximately majorized.

\paragraph{Technical Highlight: Reduction to Majorized Network Flow.} At a technical level, our majorization results proceed by reducing the problem to network flow with a source and multiple sinks, for which a majorized solution has long been known~\cite{veinott1971least,megiddo1974optimal}. To enable such a reduction, we need linear programming relaxations from the perspective of the signaling entity. Such a formulation is novel and non-trivial since  a natural mathematical program from the perspective of the signaling entity is non-convex. This is because any program needs to encode the decision maker's behavior. Handling this inherent non-convexity necessitates new structural insights (\cref{lem:struct}). Our LP formulation connects the signaling problem to classic stochastic optimization problems, such as stochastic knapsack~\cite{DBLP:journals/mor/DeanGV08} and multi-armed bandits~\cite{DBLP:journals/jacm/GuhaMS10}. This is in contrast to the more traditional and generic LP formulation for Bayesian persuasion~\cite{dughmi2016algorithmic,cummings2020algorithmic} that directly encodes the optimal behavior of the decision maker, reducing the problem to multi-dimensional mechanism design~\cite{cai2012optimal}. 

To build up to the general approximate majorization result, we first consider the simpler class of \fui{} policies alluded to above that reveal all information to the decision maker. Though such policies cannot be approximately majorized in general (see \cref{ex:full2}), we show in \cref{sec:network_flow} that when agent information is Bernoulli, such policies are without loss of generality, and are exactly majorized (hence best possible in terms of fairness). For arbitrary distributions, we show that there is always a $2$-approximately majorized policy within the class of \fui{} policies (see \cref{sec:full_maj}). These results showcase the network flow approach, which may be of independent interest in devising fair policies for other signaling problems. 

\subsection{Related Work}
\paragraph{Bayesian Persuasion.} Our selection model is a special case of \emph{information design} (see \cite{bergemann2019information,dughmi2017algorithmic} for surveys) where an information mediator provides information to impact the behavior of one or more decision makers. This has also been termed \emph{signaling} or \emph{persuasion} in the literature. In the original model of \emph{Bayesian persuasion} by \cite{Kamenica2011bayesian}, there is one agent called the \emph{receiver} who receives additional information from a better-informed \emph{sender}. Given the signal, the receiver computes its posterior over the state of nature and chooses an action to maximize its own utility. The sender can design the signals so that the receiver, acting in its own interest, maximizes some utility function the sender cares about. This problem has been widely studied in various contexts, such as price discrimination~\cite{bergemann2015limits,Banerjee2024fair}, security games~\cite{DBLP:conf/aaai/XuRDT15}, regret minimization~\cite{DBLP:conf/sigecom/BabichenkoTXZ21}, and other economic settings~\cite{DBLP:journals/mktsci/ChakrabortyH14}. 


As mentioned earlier, computational approaches to the general Bayesian persuasion problem have been proposed, with an FPTAS for any given objective function~\cite{dughmi2016persuasion,dughmi2016algorithmic}. However, our goal of majorization requires simultaneous optimization for many objectives, and here, even showing an existence result requires new ideas. Our techniques, starting with our LP relaxations from the perspective of the signaling entity, are entirely different. As a consequence, unlike~\cite{dughmi2016algorithmic}, our methods work when the signaling scheme is independent across agents, while the computational results in~\cite{dughmi2016algorithmic} require correlation between the signals. 

\paragraph{Majorization.} The concept of majorized vectors is a classical one~\cite{karamata1932inegalite,hardy1934inequalities}, and is equivalent to solutions that simultaneously maximize symmetric concave functions of the coordinates. In the context of resource allocation and routing problems, an approximate version of this concept was defined by~\cite{goel2005approximate,goel2006simultaneous}; see also~\cite{kumar2006fairness,chakrabarty2019approximation}. It was shown by~\cite{goel2005approximate} that the best approximation factor is the solution of a linear program. However, the approximation factor is problem-dependent and can be linear in the number of dimensions. One major exception is the single-source multi-sink network flow problem, for which the elegant result of Veinott~\cite{veinott1971least} shows exact majorization. Our main contribution is showing a surprising connection between this work and the  selection problem. We show that the approximation factor for our problem is only logarithmic in the range of the quality scores, hence achieving stronger fairness properties than what the generic bounds would indicate. 

We note that the work of~\cite{Banerjee2024fair} presents a constant-approximation to majorization for the special case of price discrimination. However, the state of nature in price discrimination is single dimensional (value of a single buyer), while it is multi-dimensional for our setting (joint quality distribution for $n$ agents). Indeed, a constant-approximation is provably unachievable in our setting, and we need new technical ideas to show our approximation results. As far as we are aware, our work gives the first majorization results for persuasion problems with multi-dimensional type spaces.

\paragraph{Fair Selection.}
The meritocratic policy of selecting the candidate(s) with the highest evaluation scores is particularly natural and widely used. However, this policy can be at odds with fairness considerations. For instance, in a college admissions setting, standardized test outcomes can have correlations with demographic factors~\cite{dixon2013race}, and these correlations can undesirably and significantly skew demographic outcomes. To mitigate this, one existing approach is to have a quota system setting aside a number of spots for candidates from certain demographic groups. A prominent example is the Rooney Rule, which among other instructions, requires that each National Football League (NFL) team must ``employ a female or minority coach as an offensive assistant''.\footnote{ \url{https://operations.nfl.com/inside-football-ops/inclusion/inclusive-hiring}, accessed June 28, 2024.} Different versions of the Rooney Rule have been seen in the corporate world, in governments, and in academia -- those rules require employers to hire or at least interview candidates from prespecified demographic groups. Researchers have been studying the effectiveness of Rooney Rule variants in fair selection 
and hiring~\cite{kleinberg2018selection,celis2020interventions}.

Policies that consider demographic factors can fail when demographic information is unavailable, or when there is no consensus on which demographic groups should be protected. Interventions that explicitly consider demographic factors may also face societal and legal challenges. For instance, some practices have been deemed unlawful in U.S. Supreme Court decisions, such as racial quotas (Regents of the University of California v.\@ Bakke (1978)) and more recently, race-based affirmative action (Students for Fair Admissions v.\@ Harvard (2023)).
Additionally, revealing demographic information can have adverse impact in hiring situations on, for example, women~\cite{goldin2000orchestrating} and African Americans~\cite{bertrand2004emily}.

Departing from the above approaches, recent work explicitly models uncertainty in evaluation measures~\cite{emelianov2020fair,mehrotra2021mitigating,garciasoriano2021maxmin,singh2021fairness,shen2023fairness,devic2024stability} and proposes fair randomized selection rules under models of fairness and uncertainty over applicant qualities and attributes. Our work falls in this framework, but differs in the following way: We incorporate information revelation about applicant qualities, in order to (Bayesian-)persuade the meritocratic decision maker to act in line with fairness considerations. 

\paragraph{Selection via Persuasion.} Our model is similar to the selection models in~\cite{AuKawai,Du2024}. Unlike our model, in their model, there is no intermediary, and the agents are free to choose their signaling schemes. They show that the resulting game over the agents has an equilibrium under fairly general conditions. In contrast, we consider the setting with an intermediary that seeks fairness across agents; in other words, we assume agents can coordinate their signaling scheme for mutual benefit. Further, the models in \cite{AuKawai,Du2024} assume the agent utilities are different from the social planner's. In particular, agent utilities are binary -- $1$ for being selected and $0$ otherwise -- while we assume they are the same as their quality. Our setting models how the intermediary or social planner will perceive utility (as the quality), while in their setting, the utility is how it would have been perceived by selfish agents. We note however that our results are robust to the choice of utilities, and as we show in \cref{sec:asym}, it is easy to modify our positive results to work as is in their utility model as well.



\section{Our Model and Results}
\label{sec:prelim}
A social planner wishes to select one agent from a set $A$ of $n$ agents $\{1, 2, \ldots, n\}$. Each agent $i$ obtains value $v_i$ if selected, where $v_i$ is drawn from a distribution $D_i$. We assume that the distributions $\{D_i\}_{i = 1}^n$ are mutually independent. In keeping with the Bayesian persuasion literature, we call the social planner the \emph{receiver}. The receiver knows the distributions $\{D_i\}_{i = 1}^n$, but not the values themselves, which are private information to the agents.


There is an information intermediary who knows the exact values $\{v_i\}_{i = 1}^n$. Again following the Bayesian persuasion literature, we call the information intermediary the \emph{sender}. The sender can reveal partial information about the values to the receiver via a \emph{\sgp}. (We will define \sgps{} and signals in \cref{sec:signal_def}.) The receiver hence receives a \emph{signal} $\sigma$, and uses $\sigma$ to update the prior value distributions $\{D_i\}_{i = 1}^n$ to posterior distributions $\{D_i(\sigma)\}_{i = 1}^n$. 


\subsection{Signaling Model}
\label{sec:signal_def}




\paragraph{\ISS{}s.}
An \emph{\iss{}} $\omega$ works with a set $\Gamma$ of signals, and we will often use $\sigma \in \Gamma$ to denote a signal. An \iss{} in our model has two components: the \emph{\mpr} and the \emph{\slr}.


The \emph{\mpr} is a function that maps the values $\vec{v} = \{v_i\}_{i=1}^n$ to a distribution $g_{\vec{v}}$ over the signals $\Gamma$. The function is ``independent'' for each agent: There is a set $\Gamma_i$ of signals for each agent $i$. The sender, after observing $v_i \sim D_i$, maps it to a distribution $g_{iv}$ over signals in $\Gamma_i$. The sender then generates each signal $\sigma_i \sim g_{iv}$ and sends the set of generated signals $\sigma = \{\sigma_i\}_{i=1}^n$ to the receiver. The receiver computes a per-agent posterior $D_i(\sigma_i)$ (where $\{D_i(\sigma_i)\}_{i = 1}^n$ are mutually independent). We can alternatively think that there is a separate sender for each agent that outputs a signal for that agent independent of the behaviors of other senders. This policy is known to the receiver. 

After receiving the signal $\sigma \in \Gamma$, the receiver uses Bayes' rule and its knowledge of the mapping policy to compute the independent posterior distributions $\{D_i(\sigma)\}_{i = 1}^n$ over agent values. In our model, the receiver is a utilitarian welfare maximizer. Therefore, it will select an agent with the largest posterior mean -- that is, an agent in the set $S(\sigma) := \argmax_i \E[D_i(\sigma)]$. In the case of tie-breaking (i.e., when $|S(\sigma)| > 1$), we assume that the sender can tell the receiver who to select within the set $S(\sigma)$ via a \emph{\slr{}} that picks one agent in the set $S(\sigma)$ either deterministically or probabilistically. The receiver will follow the recommendation of the \slr{}.


\paragraph{\SGPs{}.}
A \emph{\sgp} is a distribution $\Omega$ over \iss{}s. The receiver draws an \iss{} $\omega \sim \Omega$ from this distribution and implements it. We use $\Omega$ to denote both the \sgp{} and the distribution.\footnote{One can also define more general versions of \sgps{} that create signals with more correlation among the agents, but our definition aligns with our motivating practical applications.}

\subsection{Fairness via Approximate Majorization}  
Given a \sgp{} $\Omega$, we use $U_i(\Omega)$ to denote the expected utility that agent $i$ receives, where the expectation is over the randomness in the \sgp{} and the value distributions of the agents. This is formalized in \cref{def:expected_utility}. 

\begin{definition} [expected utility]
\label{def:expected_utility}
The \emph{expected utility} of agent $i$ from $\Omega$ is
\[
U_i(\Omega) =  \E_{\omega \sim \Omega} \left[ \sum_{\sigma \in \Gamma} q_{\omega}(\sigma)  \cdot \pw_i(\sigma) \cdot \mu_i(\sigma) \right].
\]
In this formula, after $\omega \sim \Omega$ is drawn, $q_{\omega}(\sigma)$ denotes the probability (over agent values $\vec{v} \sim \prod_i D_i$) that the receiver observes the joint signal $\sigma$; $\pw_i(\sigma)$ denotes the probability that agent $i$ is selected when the signal is $\sigma$; $\mu_i(\sigma) = \E[D_i(\sigma)]$ denotes the posterior mean value of agent $i$ given signal $\sigma$.
\end{definition}


The goal of the sender is to design a \sgp{} $\Omega$ that is \emph{fair} with respect to the expected utilities $\{U_i(\Omega)\}_{i=1}^n$. This is captured by the notion of $\alpha$-majorization \cite{goel2005approximate,goel2006simultaneous}, an extension to the classical mathematical notion of majorization \cite{karamata1932inegalite,hardy1934inequalities}.

\begin{definition} [$\alpha$-majorization]
\label{def:alpha_majorization}
For $\alpha \ge 1$, a \sgp{} $\Omega$ is called \emph{$\alpha$-majorized}\footnote{The notion of ``$1$-majorized'' is also known as ``least weakly supermajorized'' in the literature~\cite{tamir1995least}.}  if for any $k \in \{1,2,\ldots,n\}$ and any \sgp{} $\Omega'$, the sum of the $k$ smallest utilities in $\{U_i(\Omega)\}_{i=1}^n$ is at least $1/\alpha$ times the sum of the $k$ smallest utilities in $\{U_i(\Omega')\}_{i=1}^n$.
\end{definition}

Sometimes, we use the phrase ``$\alpha$-majorization within a class $\mathcal{C}$ of signaling policies''. Its definition is to replace ``any signaling policy $\Omega'$'' to ``any signaling policy $\Omega' \in \mathcal{C}$'' in \cref{def:alpha_majorization}.

Denote $U(\Omega)$ as the vector $\{U_i(\Omega)\}_{i=1}^n$. The following result shows that approximate majorization is equivalent to simultaneously approximating all symmetric and concave welfare functions. It also holds when restricting to any class of signaling policies.

\begin{proposition} [adapted from \cite{goel2006simultaneous}]
The signaling policy $\Omega$ is \emph{$\alpha$-majorized} if and only if for every symmetric and concave function\footnote{Such a function will also be monotonically non-decreasing.} $f \colon \mathbb{R}_{\geq 0}^n \to \mathbb{R}_{\geq 0}$ (called a \emph{welfare function} or \emph{fairness function}) and any other signaling policy $\Omega'$, it holds that 
\[
f\left( U(\Omega) \right) \ge \frac{1}{\alpha} \cdot f\left( U(\Omega') \right).
\]
\end{proposition}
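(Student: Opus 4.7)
The plan is to argue the two directions separately. For the ``if'' direction ($(\Leftarrow)$), I would simply apply the assumed inequality with $f = s_k$, where
\[
s_k(x) \;:=\; \min_{S \subseteq [n],\, |S|=k}\; \sum_{i \in S} x_i
\]
is the sum of the $k$ smallest coordinates. As a minimum of linear functions $s_k$ is concave; it is visibly symmetric, monotone non-decreasing, and maps $\mathbb{R}_{\geq 0}^n$ into $\mathbb{R}_{\geq 0}$, so it is a valid welfare function. The resulting inequality $s_k(U(\Omega)) \geq \tfrac{1}{\alpha} s_k(U(\Omega'))$ for every $k$ and every $\Omega'$ is exactly the condition in \cref{def:alpha_majorization}.

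For the ``only if'' direction ($(\Rightarrow)$), I would fix any symmetric concave $f$ and any alternative policy $\Omega'$, and set $x := U(\Omega)$, $y := U(\Omega')$. The hypothesis of $\alpha$-majorization gives $s_k(x) \geq \tfrac{1}{\alpha} s_k(y) = s_k(y/\alpha)$ for every $k \in \{1,\ldots,n\}$. I would then invoke the classical Tomi\'c/Marshall--Olkin characterization of weak supermajorization: for non-negative vectors in $\mathbb{R}^n$, the system $\{s_k(x) \geq s_k(z)\}_{k=1}^n$ holds if and only if there exists a doubly stochastic matrix $P$ with $x \geq P z$ componentwise. Applying this to $z = y/\alpha$ yields such a $P$ with $x \geq P(y/\alpha)$.

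Three chained inequalities then complete the proof. First, monotonicity of $f$ (guaranteed by the footnote) gives $f(x) \geq f(P(y/\alpha))$. Second, Schur-concavity of $f$, which follows from symmetry and concavity via Birkhoff--von Neumann (writing $P(y/\alpha)$ as a convex combination of permutations of $y/\alpha$, and applying concavity plus symmetry), gives $f(P(y/\alpha)) \geq f(y/\alpha)$. Third, concavity together with $f(\mathbf{0}) \geq 0$ (automatic from the codomain being $\mathbb{R}_{\geq 0}$) and $\alpha \geq 1$ gives
\[
f(y/\alpha) \;=\; f\Bigl(\tfrac{1}{\alpha} y + \bigl(1 - \tfrac{1}{\alpha}\bigr)\mathbf{0}\Bigr) \;\geq\; \tfrac{1}{\alpha} f(y) + \bigl(1 - \tfrac{1}{\alpha}\bigr) f(\mathbf{0}) \;\geq\; \tfrac{1}{\alpha} f(y).
\]
Chaining these yields $f(x) \geq \tfrac{1}{\alpha} f(y)$, as required. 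The restricted-class variant (``$\alpha$-majorization within $\mathcal{C}$'') follows by restricting the quantifier over $\Omega'$ throughout, since neither direction uses more than the stated hypothesis for alternatives in $\mathcal{C}$.

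The main obstacle is the appeal to the Tomi\'c/Marshall--Olkin structural characterization — this is standard in majorization theory but needs a careful citation. If one prefers a self-contained argument, the characterization can be proved directly by an LP-feasibility/transportation argument, and the Goel--Meyerson approach proceeds more explicitly using the fact that symmetric concave functions can be decomposed as non-negative combinations of the $s_k$'s (possibly after monotone reparametrization), yielding the same conclusion.
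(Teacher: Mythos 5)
The paper does not actually prove this proposition --- it simply cites Goel--Meyerson and moves on --- so there is no in-paper argument to compare against; the question is purely whether your argument stands on its own, and it does. The easy direction via the function $s_k$ (sum of the $k$ smallest coordinates, which is indeed symmetric, concave as a min of linear maps, monotone, and non-negative) exactly recovers \cref{def:alpha_majorization}. For the hard direction, the chain $f(x)\ge f(P(y/\alpha))\ge f(y/\alpha)\ge \tfrac{1}{\alpha}f(y)$ is sound: the first step uses monotonicity of $f$ (justified by the footnote, which is itself correct --- if a concave $f$ decreased along any coordinate ray it would be forced to $-\infty$, contradicting non-negativity); the middle step correctly derives Schur-concavity of $f$ from Birkhoff--von Neumann plus symmetry and concavity; and the last step correctly uses concavity with $f(\mathbf{0})\ge 0$ and $\alpha\ge 1$. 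The one load-bearing external fact is that the prefix inequalities $s_k(x)\ge s_k(z)$ for all $k$ (i.e.\ $x\prec^w z$, weak supermajorization) imply $x\ge Pz$ for some doubly stochastic $P$; this is classical (Marshall--Olkin--Arnold, combining their Thm.\ 2.C.4 with Hardy--Littlewood--P\'olya/Birkhoff), but you should cite the specific theorem rather than hand-wave, since the other three inequalities lean on it. Your closing remark is apt: the Goel--Meyerson route proceeds instead by lower-bounding a symmetric concave $f$ at a point by a non-negative combination of the $s_k$'s, which sidesteps the doubly-stochastic-matrix machinery entirely; that route is a bit more self-contained, whereas yours leans on a standard structural theorem but is arguably cleaner to state. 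Either is a legitimate proof; yours is correct.
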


\begin{example}
\label{eg:maxmin}
    There are $n = 3$ agents. The value distributions $D_1, D_2$ of agents $1$ and $2$ are identical, and each of them takes value of $1$ with probability $0.5$ and takes value of $5$ with probability $0.5$. The value of agent $3$ is deterministically $2$. Consider the max-min fairness function: $f(U(\Omega)) = \min_{i = 1}^n U_i(\Omega)$, so that the sender aims at maximizing the smallest expected utility among all the agents.

    If the sender does not reveal any information about the agents, the receiver will select an agent from $\{1, 2\}$ because their expected values are both $3$, which is larger than agent $3$'s value of $2$. Since agent $3$ is never selected, the max-min welfare is $0$. 
    
    Consider the following \sgp{} $\Omega$ that implements an \iss{} $\omega$. The signal sets for each agent are:  $\Gamma_1 = \{s_1, s_1'\}$, $\Gamma_2 = \{s_2, s_2'\}$, $\Gamma_3 = \{s_3\}$. The \mpr{} of $\omega$ is given by the following probabilities:
    \begin{gather*}
        \Pr[\sigma_1 = s_1 \mid v_1 = 1] = 1;\ \Pr[\sigma_1 = s_1 \mid v_1 = 5] = \frac{1}{3};\ \Pr[\sigma_1 = s_1' \mid v_1 = 5] = \frac{2}{3};\\
       \Pr[\sigma_2 = s_2 \mid v_2 = 1] = 1;\ \Pr[\sigma_2 = s_2 \mid v_2 = 5] = \frac{1}{3};\ \Pr[\sigma_2 = s_2' \mid v_2 = 5] = \frac{2}{3};\\
        \Pr[\sigma_3 = s_3] = 1. 
    \end{gather*}   
The expected value of agent $1$ in the receiver's posterior upon receiving signal $s_1$ is 
    \begin{align*}
        \E[D_1(s_1)] & = \E[v_1 \mid \sigma_1 = s_1]\\
        & = 1 \cdot \Pr[v_1 = 1 \mid \sigma_1 = s_1] + 5 \cdot \Pr[v_1 = 5 \mid \sigma_1 = s_1] \\
        & = 1 \cdot \Pr[\sigma_1 = s_1 \mid v_1 = 1] \cdot \frac{\Pr[v_1 = 1]}{\Pr[\sigma_1 = s_1]} + 5 \cdot \Pr[\sigma_1 = s_1 \mid v_1 = 5] \cdot \frac{\Pr[v_1 = 5]}{\Pr[\sigma_1 = s_1]} \\
        & = 1\times 0.75 + 5 \times 0.25 = 2.
    \end{align*}
    Note that  $\E[D_1(s_1')] = 5$. Similarly, we have $\E[D_2(s_2)] = 2$ and $\E[D_2(s_2')] = 5$. Since $D_3$ is deterministic, we have $\E[D_3(s_3)] = 2$.
    
    The \slr{} of $\omega$ is to always break ties in favor of agent $3$ and arbitrarily break ties between agent $1$ and $2$. Since agent $3$ is selected when $s_1$ and $s_2$ are sent to the receiver, we have
    \[
    U_3(\Omega) = \left(\frac{1}{3} \times 0.5 + 1 \times 0.5\right)^2 \times 2 = \frac{8}{9}.
    \]
    Since agent $1$ is selected with value $5$ as long as $s_1'$ and $s_2$ are sent, we have 
    \[
    U_1(\Omega) \ge \left(0.5 \times \frac{2}{3}\right) \times \left(\frac{1}{3} \times 0.5 + 1 \times 0.5\right) \times 5 = \frac{10}{9} > \frac{8}{9} = U_3(\Omega).
    \]
    Similarly, we have $U_2(\Omega) > U_3(\Omega)$. Therefore, $f(U(\Omega)) = \min \{U_1(\Omega), U_2(\Omega), U_3(\Omega)\} = 8/9$. 
    We note that $\Omega$ is also the optimal signaling policy for the max-min welfare objective.
\end{example}

\subsection{Our Results} 
\paragraph{\fui{} Policy.} If the sender aims at maximizing the utilitarian welfare $\sum_{i=1}^n U_i(\Omega)$ (which corresponds to the welfare function $f(\vec x) = \sum_{i=1}^n x_i$), then one optimal \mpr{} is to fully reveal the values $\{v_i\}_{i=1}^n$. In this case, the sender's goal perfectly aligns with the receiver's selection behavior, obtaining the best possible utilitarian social welfare of $\E_{\vec{v} \sim \prod_i D_i} \left[\max_{i=1}^n v_i\right]$. We term this \mpr{} \fui{}. It is easy to check that when $\{D_i\}_{i = 1}^n$ are i.i.d., \fui{} paired with a symmetric \slr{} is $1$-majorized. Also, we call an \iss{} \fui{} if its \mpr{} is \fui{}.

Our first result shows that when each $D_i$ is $\ber(\pb_i)$ (which takes value $1$ with probability $\pw_i$ and value $0$ otherwise), there is a $1$-majorized \fui{} \iss{} -- it simultaneously optimizes all welfare functions, not just the utilitarian welfare. Though this result is for specific value distributions, its main idea of reducing to network flow will be useful for an approximate majorization result in \cref{sec:approx} for more general distributions. The selection policy can be computed in polynomial time. (Here and later when we mention ``polynomial time'' results, we assume that the value distributions are discrete.)

\begin{theorem}[Proved in \cref{sec:network_flow}]
\label{thm:full1}
When each $D_i$ is $\ber(\pb_i)$, \fui{} is $1$-majorized when paired with a certain polynomial-time-computable selection policy.
\end{theorem}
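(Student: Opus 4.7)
Under \fui{} with $D_i \sim \ber(\mu_i)$, the receiver learns each $v_i$ exactly; the value-$1$ set $S = \{i : v_i = 1\}$ becomes common knowledge, and when $S \neq \emptyset$ all agents in $S$ tie at posterior mean $1$, so the \slr{} freely distributes selection probabilities $(p_i(S))_{i \in S}$ with $\sum_{i \in S} p_i(S) = 1$ (no utility is produced when $S = \emptyset$). Each agent's expected utility therefore takes the form $U_i = \sum_{S \ni i} \Pr[S]\, p_i(S)$, with $\Pr[S] = \prod_{i \in S} \mu_i \prod_{i \notin S}(1-\mu_i)$. My plan is to reformulate this as a single-source multi-sink flow problem and invoke the classical Veinott–Megiddo majorization result, while sidestepping the exponentially many subset nodes by exploiting a hidden polymatroid structure.

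I would first characterize the realizable $U$-vectors via a bipartite flow network with source $s$, one intermediate node $t_S$ of supply $\Pr[S]$ for each nonempty $S \subseteq [n]$, and the agents as sinks, with $s \to t_S$ of capacity $\Pr[S]$ and $t_S \to i$ unbounded for every $i \in S$. Flows correspond bijectively to valid \slr{}s (the flow on $t_S \to i$ equals $\Pr[S]\, p_i(S)$, and the sink absorption equals $U_i$). By max-flow min-cut, the set of realizable $U$-vectors is exactly the base polytope of the polymatroid with rank function
\[
f(T) := \sum_{S : S \cap T \neq \emptyset} \Pr[S] = 1 - \prod_{i \in T}(1-\mu_i).
\]
Submodularity of $f$ is immediate because $f(T \cup \{i\}) - f(T) = \mu_i \prod_{j \in T}(1-\mu_j)$ is nonincreasing in $T$; monotonicity and $f(\emptyset) = 0$ are obvious.

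By the classical Veinott–Megiddo theorem (equivalently, Fujishige's leximin theorem for polymatroid bases), this base polytope contains a unique lex-max-min vector $U^*$ that simultaneously maximizes the sum of the $k$ smallest coordinates over all feasible vectors, for every $k$; that is, $U^*$ is $1$-majorized in the sense of \cref{def:alpha_majorization}. Since $f$ admits an $O(n)$-time oracle, a Fujishige-type greedy algorithm computes $U^*$ and a convex decomposition of $U^*$ into at most $n$ polymatroid vertices in polynomial time, each vertex corresponding to a priority order over the agents. The final \slr{} samples a priority order from this mixture and, upon observing the realized $S$, selects the top-priority agent in $S$. The main obstacle is that the natural flow network has $2^n$ subset nodes, so I would avoid enumerating them explicitly and work throughout in the polymatroid/oracle abstraction, which makes both the leximin computation and the vertex decomposition polynomial in $n$.
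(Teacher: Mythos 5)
Your approach is the paper's approach -- reduce \fui{} to a single-source multi-sink flow, recognize the polymatroid with rank function $1 - \prod_{i\in T}(1-\pb_i)$, invoke the Veinott/Megiddo least-majorized-flow theorem, and obtain a polynomial-time algorithm via polymatroid oracle access rather than by enumerating subset nodes -- but it has one genuine gap.

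The theorem asserts that \fui{} is $1$-majorized in the sense of \cref{def:alpha_majorization}, which quantifies over \emph{all} signaling policies $\Omega'$, not only \fui{} policies. Your argument characterizes the set of utility vectors achievable \emph{under \fui{}} as a polymatroid base polytope, then finds its leximin point. That establishes $1$-majorization only \emph{within} the class of \fui{} policies. You never argue that a non-\fui{} policy cannot produce a utility vector that defeats your $U^*$ on some prefix sum. The paper closes exactly this hole first, via a separate lemma: for Bernoulli values, any signaling policy $\Omega$ can be simulated by a \fui{} policy $\Omega'$ with $U_i(\Omega') \ge U_i(\Omega)$ for every $i$ (roughly, the receiver's posterior for a Bernoulli agent is summarized by the probability the value is $1$, and the \fui{} selection rule can reproduce each agent's conditional selection probability given $v_i=1$). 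Once you have this pointwise-dominance lemma, $1$-majorization within \fui{} implies $1$-majorization globally, since pointwise dominance preserves all bottom-$k$ prefix sums. Without it, your conclusion that ``$U^*$ is $1$-majorized in the sense of \cref{def:alpha_majorization}'' does not follow from what you proved. Adding this reduction lemma would make your proof complete and essentially identical to the paper's.
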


To prove the above theorem, we reduce the problem to network flow and apply a seminal majorization result for flows from~\cite{megiddo1974optimal} (see \cref{thm:bern_main}). A similar technique shows that even for general distributions, when restricted to the class of \fui{} policies, there is a $2$-majorized policy.

\begin{theorem}[Proved in \cref{sec:full_maj}]
\label{thm:full3}
For general distributions $\{D_i\}_{i = 1}^n$, \fui{}, when paired with a certain polynomial-time-computable selection policy, is $2$-majorized within the space of \fui{} policies.
\end{theorem}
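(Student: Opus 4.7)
The plan is to extend the network-flow reduction underlying \cref{thm:full1} to general distributions. Under a \fui{} policy the receiver must always pick from $\argmax_i v_i$ on every realization $\vec v$, so the only freedom is a randomized tie-breaking rule $x_{\vec v,i}\ge 0$ with $x_{\vec v,i}>0$ only for $i\in\argmax(\vec v)$ and $\sum_i x_{\vec v,i}=1$. I would encode each such rule as a feasible flow in a network with source $s$, agent-sinks $t_1,\ldots,t_n$, and one intermediate node $n_{\vec v}$ per realization: the arc $s\to n_{\vec v}$ carries forced flow $V(\vec v)\cdot p_{\vec v}$, where $V(\vec v)=\max_i v_i$, and $n_{\vec v}\to t_i$ exists whenever $i\in\argmax(\vec v)$. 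The sink flow at $t_i$ then equals $U_i(\Omega)$. Applying Veinott--Megiddo \cite{veinott1971least,megiddo1974optimal} to this network immediately yields a feasible flow that is lex-majorized in its sinks, establishing the \emph{existence} of a $1$-majorized \fui{} policy.

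The difficulty is that this network has up to $\prod_i |D_i|$ intermediate nodes, which is generally exponential. To obtain a polynomial-time algorithm at the price of a factor of $2$, I would geometrically discretize the utility scale: bucket the range of $V(\vec v)$ into windows $B_k=[2^k,2^{k+1})$ and merge every cluster of event nodes with $V(\vec v)\in B_k$ sharing the same argmax set $S$ into a single super-node of throughput $2^k\cdot\sum_{\vec v\in (k,S)}p_{\vec v}$, keeping the onward arcs to $\{t_i:i\in S\}$. Let $\widetilde U$ denote the resulting ``rounded'' sink utilities; by construction, $\widetilde U_i\le U_i\le 2\widetilde U_i$ coordinate-wise for every rule. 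Running Veinott--Megiddo on this aggregated network returns, in polynomial time, a flow $\bar x^\ast$ that is $1$-majorized in $\widetilde U$.

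Lift $\bar x^\ast$ to a rule $x^\ast$ on the original network by copying its aggregated probabilities onto each underlying $\vec v$. For any competing \fui{} rule $y$ with aggregation $\bar y$, writing $\Sigma_k(u)$ for the sum of the $k$ smallest coordinates of $u$, we obtain
\[
\Sigma_k\!\bigl(U(x^\ast)\bigr)\;\ge\;\Sigma_k\!\bigl(\widetilde U(\bar x^\ast)\bigr)\;\ge\;\Sigma_k\!\bigl(\widetilde U(\bar y)\bigr)\;\ge\;\tfrac{1}{2}\,\Sigma_k\!\bigl(U(y)\bigr),
\]
where the outer inequalities use the coordinate-wise relations between $U$ and $\widetilde U$ (which transfer to $\Sigma_k$ since $\min_{|T|=k}\sum_{i\in T}u_i$ is monotone in the vector), and the middle inequality is the $1$-majorization of $\bar x^\ast$ in the aggregated polytope. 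This delivers the claimed $2$-majorization within the \fui{} class.

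The main obstacle is ensuring polynomial-time solvability of the aggregated network. Bucketing reduces the distinct value levels to $O(\log V)$, but the number of argmax supports per bucket may still be exponential in $n$; the crux is adapting the succinct flow/polymatroid description exploited in the proof of \cref{thm:full1} to the multi-level bucketed setting so that no explicit enumeration of all $2^n$ argmax subsets is required.
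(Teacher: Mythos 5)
Your approach is genuinely different from the paper's. You use the \emph{exact} combinatorial structure (one node per realization $\vec v$, arcs only to $\argmax(\vec v)$) and then buy tractability at the cost of a factor-$2$ loss by geometrically rounding $\max_i v_i$. The paper instead sets up a \emph{relaxation} with only $m$ intermediate nodes (one per support point $v_j$): it conditions on the event $Z_j$ that the overall max is at most $v_j$, lets $x_{ij}$ be the probability that agent $i$ has value $v_j$ and is selected given $Z_j$, and imposes only $\sum_i x_{ij}\le 1$ and $x_{ij}\le p_{ij}$. That polytope strictly contains all \fui{} policies, Veinott--Megiddo gives a $1$-majorized flow over it in polynomial time, and the factor $2$ comes not from rounding values but from a contention-resolution rounding of that flow back to an actual tie-breaking rule (random-order, pick agent $i$ with probability $x_{ij}/p_{ij}$), in the style of stochastic knapsack. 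So the two arguments have the same ``network-flow $\Rightarrow$ Veinott majorization'' spine but diverge on (a) where the polynomial-size LP comes from and (b) where the factor $2$ is paid.

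The genuine gap in your proposal is the one you already flag: your aggregated network still has super-nodes indexed by pairs $(k,S)$ with $S\subseteq[n]$ an argmax support, which is exponential in $n$ per bucket, and you do not supply the succinct polymatroid description or separation oracle needed to optimize over its sink-flow polytope in polynomial time. This is not just an implementation detail. Even granting that Megiddo's lemma gives a polymatroid whose rank function $g(T)=\sum_k 2^k\cdot\Pr[\max\in B_k,\ \argmax\cap T\ne\varnothing]$ is evaluable in polynomial time, you would still need to (i) argue $1$-majorization is computable over that implicitly given polymatroid, and (ii) decompose the resulting fractional point into implementable priority orders so the receiver has an actual \slr{}. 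The paper's relaxation sidesteps all of this: its network has only $m$ intermediate nodes and all edge capacities are explicit, so Veinott applies directly, and the rounding step turns the LP solution into a concrete randomized tie-breaking rule with a clean half-is-preserved argument. Your inequality chain $\Sigma_k(U(x^\ast))\ge\Sigma_k(\widetilde U(\bar x^\ast))\ge\Sigma_k(\widetilde U(\bar y))\ge\tfrac12\Sigma_k(U(y))$ is sound given a computable $1$-majorized flow on the aggregated network, so if you can fill the computational gap your route would also prove the theorem; as written, it does not.
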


Note that \cref{thm:full3} shows majorization \emph{within} the class of \fui{} policies. In contrast, the example below shows that if we consider the space of \emph{all} signaling schemes, \fui{} policies cannot achieve a good approximate majorization. This impossibility result motivates us to study other classes of signaling policies.

\begin{example}
\label{ex:full2}
The distribution $D_1$ is deterministically $2$, and each of the distributions $D_2, \ldots, D_n$ is $1$ with probability $1/2$ and $3$ with probability $1/2$. Consider the welfare function of max-min fairness where  $f(U(\Omega)) = \min_i U_i(\Omega)$. In \fui{}, $U_1 = 2 / 2^{n - 1}$, and therefore the max-min welfare for \fui{} is at most $2 / 2^{n - 1}$. (To calculate the expected utilities of other agents, we have $\E[\max_{i=2}^n D_i] = 3 \cdot (1-1/2^{n - 1}) + 1 \cdot (1/2^{n - 1}) \approx 3$ when $n$ is large. Therefore, for a symmetric selection policy, $U_2 = \cdots = U_n \approx 3 / n$.) In contrast, if the policy reveals nothing, all agents have posterior mean $2$. Assuming the selection policy assigns uniformly at random, $U_1 = U_2 = \cdots = U_n = 2 / n$. Therefore, \fui{} is no better than a $\left(2^{n-1} / n\right)$-approximation to the max-min fairness objective. 
\end{example}



\paragraph{Approximate Majorization.} We now consider the question of majorization among \emph{all} \sgps{}. For the rest of our results, we assume that each $D_i$ is supported on $[1,V]$. (This interval is equivalent to any $[v_{\min},v_{\max}]$ with $V = v_{\max} / v_{\min}$ via scaling.) Our results characterize the approximation to majorization as functions of $V$. The main hurdle with formulating a mathematical program for finding such majorized \sgps{} is the non-linear and non-convex interaction between variables encoding mapping and those encoding selection. 


We overcome this challenge in two steps: First, we split the utility of a fixed policy into buckets. Second, we efficiently find a mapping that ensures a large utility in some bucket by solving a simple LP with constraints on posterior means, enabling a reduction to the network flow approach from before. For the second step, we consider a specific type of \mpr{}s which we call \sgm{}. Such a \mpr{} picks a common range $[\mu, \hat{\mu}]$ for all agents and finds a mapping for each agent that maximizes the probability that the mean of the resulting signal lies in the range. We will choose $\hat{\mu} = (1+\epsilon) \mu$ for some constant $\epsilon > 0$. The \sgps{} we consider randomize over \sgm{} \mpr{}s for different $\mu$, and we will describe the corresponding \slr{}s later. Note that the \mpr{} for any agent does not depend on other agents, so that the mapping can be constructed in a distributed way.

We show that this class of policies suffices to achieve a non-trivial positive approximation result for majorization. Our result is \emph{bicriteria}, in the following sense. Let $Z_k$ denote the maximum achievable value of the sum of the smallest $k$ utilities. For given $\epsilon > 0$, we say that a policy $\Omega$ is a bicriteria $\alpha$-approximation if the following hold. 
\begin{itemize}
\item We allow $\Omega$ to use a $(1+\epsilon)$-approximate welfare maximizer\footnote{We note that polynomial-time-computability results of~\cite{dughmi2016algorithmic} also assume an approximately optimal receiver.} -- that is, for any signal $\sigma$, the \slr{} can select any agent $i$ with $\E[D_i(\sigma)] \ge \max_j \E[D_j(\sigma)] / (1+\epsilon)$.
\item For each $k$, the sum of the smallest $k$ utilities in $\Omega$ is at least $Z_k / \alpha$. Note that the quantity $Z_k$ assumes the receiver is an exact welfare maximizer.
\end{itemize}




\begin{theorem}[Proved in \cref{sec:approx}]
\label{thm:approx}
For any constant $\epsilon > 0$, there is a bicriteria $ O\left((\log V) / \epsilon\right)$-majorized polynomial-time-computable \sgp{}. 
\end{theorem}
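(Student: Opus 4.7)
The plan is to construct the policy as a uniform mixture over $L = O((\log V)/\epsilon)$ \sgm{} subpolicies, one per mean bucket, and prove bicriteria majorization by a bucketwise comparison against any reference \sgp{}. First, I would partition $[1,V]$ geometrically into $L$ buckets $B_j = [(1+\epsilon)^{j-1},(1+\epsilon)^j]$. For any reference policy $\Omega'$, each utility $U_i(\Omega')$ decomposes as $\sum_j U_{i,j}(\Omega')$, where $U_{i,j}(\Omega')$ is the contribution from joint signals under which the selected agent's posterior mean lies in $B_j$; by the $(1+\epsilon)$-slack the receiver enjoys, any agent whose own posterior mean lies in $B_j$ may legitimately be selected on such a signal.

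Next, for each bucket $j$ with $\mu_j = (1+\epsilon)^{j-1}$, I would build a \sgm{} \mpr{} $\omega_j$: independently for every agent $i$, solve the LP that maximizes the probability of agent $i$'s posterior mean falling in $[\mu_j,(1+\epsilon)\mu_j]$, subject to Bayes-plausibility. Here I would invoke the structural lemma (\cref{lem:struct}) to argue that this per-agent LP is not only solvable in polynomial time but that its optimal value $p_{i,j}$ upper bounds, up to the bicriteria slack, the probability that agent $i$'s posterior mean lands in $B_j$ under \emph{any} reference \mpr{}. This is the step where non-convexity of the joint (mapping, selection) problem is defanged: we reduce to a per-agent, single-constraint LP whose output is a Bernoulli-like ``hit / miss'' random variable for each agent.

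Given $\omega_j$, the bucket-$j$ slice of the problem now looks exactly like the Bernoulli regime of \cref{thm:full1}: each agent independently either ``hits'' bucket $j$ with probability $p_{i,j}$ or not, and on any joint signal the receiver (using its $(1+\epsilon)$-slack) may be steered onto any hitting agent. I would therefore construct the single-source / multi-sink flow network from the proof of \cref{thm:full1}, with a sink per agent and capacities derived from the joint ``some agent hits'' event, and apply Veinott--Megiddo~\cite{veinott1971least,megiddo1974optimal} to obtain a \slr{} whose within-bucket utility vector $(U_{i,j}(\omega_j))_i$ is exactly majorized (and thus dominates $(U_{i,j}(\Omega'))_i$ in the sum-of-smallest-$k$ sense for every $k$).

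Finally, the overall \sgp{} $\Omega$ would draw $j\in\{1,\dots,L\}$ uniformly and play $\omega_j$, giving $U_i(\Omega)=\tfrac{1}{L}\sum_j U_{i,j}(\omega_j)$. For any $k$, the sum of the $k$ smallest entries of $U(\Omega)$ is at least $\tfrac{1}{L}\sum_j$ (sum of $k$ smallest of $(U_{i,j}(\omega_j))_i)$, which by the within-bucket majorization is at least $\tfrac{1}{L}\sum_j$ (sum of $k$ smallest of $(U_{i,j}(\Omega'))_i)\ge \tfrac{1}{L}$ (sum of $k$ smallest of $U(\Omega')$), yielding $\alpha = L = O((\log V)/\epsilon)$. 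The main obstacle I expect is Step 2: showing that the simple per-agent \sgm{} LP is optimal, up to the allowed bicriteria slack, against arbitrary Bayes-plausible mappings; without a clean structural reduction to single-bucket posteriors, one cannot cleanly apply the flow argument, and the full non-convex joint problem returns. Lemma~\ref{lem:struct} is the crux that makes this reduction work.
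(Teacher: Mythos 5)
Your bucketing, your \sgm{} projections, your use of \cref{lem:struct} to replace arbitrary mappings by a per-agent, per-bucket maximal LP, and your idea of reducing to the Veinott--Megiddo flow majorization are all the same ingredients the paper uses. But there is a genuine gap in your final aggregation step, and it is not the step you flagged.

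You majorize each bucket separately and then try to combine: writing $\phi_k$ for the ``sum of the $k$ smallest coordinates'' functional, your chain is
\[
\phi_k\bigl(U(\Omega)\bigr) \;\ge\; \tfrac{1}{L}\sum_j \phi_k\bigl(U_{\cdot,j}(\omega_j)\bigr) \;\ge\; \tfrac{1}{L}\sum_j \phi_k\bigl(U_{\cdot,j}(\Omega')\bigr) \;\ge\; \tfrac{1}{L}\,\phi_k\bigl(U(\Omega')\bigr).
\]
The last inequality is false, and in fact goes the wrong way: $\phi_k(x)=\min_{|S|=k}\sum_{i\in S}x_i$ is a minimum of linear forms, hence superadditive, so $\sum_j \phi_k\bigl(U_{\cdot,j}(\Omega')\bigr) \le \phi_k\bigl(\sum_j U_{\cdot,j}(\Omega')\bigr)=\phi_k\bigl(U(\Omega')\bigr)$, with the gap potentially enormous. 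A two-agent, two-bucket example makes this vivid: if agent~$1$ gets all its utility from bucket~$1$ and agent~$2$ from bucket~$2$, each per-bucket minimum is $0$ while the aggregate minimum is strictly positive. Bucketwise majorization therefore does not imply majorization of the sum; it can be arbitrarily far off.

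The paper avoids this by \emph{not} majorizing bucket by bucket. After applying \cref{lem:struct} to fix the mapping variables, it assembles a single source/multi-sink flow problem (\cref{prog:pmaj3}) whose intermediate nodes range over \emph{all} buckets simultaneously and whose sinks are the agents. Veinott--Megiddo is then invoked once on this global network, so the $1$-majorized flow can shift mass across buckets to the globally worst-off agents--precisely the coordination your per-bucket scheme lacks. Note also that the paper relaxes the exponential per-bucket polymatroid you invoke from \cref{thm:full1} to the simpler knapsack constraints $\sum_i x_i^k\le 1$, $0\le x_i^k\le p_i^k$; this both makes the cross-bucket flow polynomial-sized and requires a final random-order rounding (as in \cref{sec:full_maj}) that costs an extra factor $2$. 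Your proposal omits that rounding, which is a smaller issue, but the cross-bucket aggregation is the load-bearing correction you need.
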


\begin{remark}
The generic result of \cite{goel2006simultaneous} can be applied to our setting and give an $O\left(\min\left\{n,\log \frac{P_{\max}}{n \cdot P_{\min}}\right\}\right)$-majorized solution, where $P_{\min} = \max_{\Omega} \min_i U_i(\Omega)$ and $P_{\max} = \max_{\Omega} \sum_i U_i(\Omega)$. 
In our setting, the former quantity could be exponentially (in $n$) smaller than the latter, leading to the approximation ratio having linear dependence on $n$. In contrast, our main result has no dependence on $n$ and only logarithmic dependence on $V$, the range of values.
\end{remark}

At a technical level, \cref{lem:struct} characterizes the structure about the class of \sgm{} mappings, and enables us to pre-compute the mapping variables. We can therefore decouple the mapping variables from the selection variables, and reduce the non-convex mathematical program to a network flow problem. Finally, we use the majorization result in \cref{thm:bern_main} to complete the proof.  

As an immediate corollary, this shows a polynomial-time $O((\log V) / \epsilon)$ bicriteria approximation algorithm for any given fairness function $f$ when agents generate mappings in a distributed fashion, which is desirable for the applications of fair selection. We note that the FPTAS in~\cite{dughmi2016algorithmic} assumes that the sender can use all agents' values to generate a common signal.


In \cref{sec:asym}, we extend the utility model to be asymmetric between the agents and the receiver, capturing the setting in~\cite{AuKawai}. In this model, the signals and the receiver's action depend on the quality of the agents, while an agent gets utility $1$ if selected and $0$ otherwise. We show that \cref{thm:approx} extends as is, and shows approximate majorization even in this setting.

We complement the previous theorem with a lower bound showing a dependence on $V$ in the approximation factor is unavoidable, even if the receiver is an approximate welfare maximizer.

\begin{theorem}[Proved in \cref{sec:main_lb}]
\label{thm:main_lb}
No \sgp{} can be $(\log \log V) / 3$-majorized, even if the \slr{} can \emph{arbitrarily} select any agent (i.e., without the constraint that the selected agent has near-optimal expected value.). 
\end{theorem}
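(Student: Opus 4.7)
My plan is to exhibit an adversarial instance on $[1,V]$ whose set of achievable utility vectors has a Pareto frontier with $\Theta(\log\log V)$ qualitatively distinct corners, and then argue by a pigeonhole across these corners that no single \sgp{} can simultaneously come within a factor $(\log\log V)/3$ of the optimum sum-of-$k$-smallest-utilities for every $k$. The construction I would attempt consists of $L=\Theta(\log\log V)$ agents whose value distributions are of the form ``take a large value roughly $2^{2^\ell}$ with doubly-exponentially small probability, otherwise a small baseline,'' with the probabilities normalized so that the prior means all agree. The reason for this form is that Bayesian consistency forces the event ``scale-$\ell$ agent has posterior mean above threshold $2^{2^{\ell-1}}$'' to have probability $\Theta(2^{-2^{\ell-1}})$, so the signaling ``budget'' for boosting different scales is effectively disjoint, and each scale must compete for a constant fraction of the total signaling mass.

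The proof would then proceed in three steps. First, for each $k\in\{1,\dots,L\}$ I would identify an explicit benchmark policy $\Omega^{(k)}$---roughly, reveal full information at scales $\le k$ and pool the rest---that attains $Z_k$, the maximum sum of the $k$ smallest coordinates of the utility vector, and verify that these benchmarks are realized by substantially different utility profiles. Second, for an arbitrary \sgp{} $\Omega$ with an \emph{unconstrained} selection rule, I would decompose $U_i(\Omega)$ by the scale at which agent $i$ is selected and apply the Bayes-consistency identity $\sum_{\sigma_i} q_i(\sigma_i)\,\mu_i(\sigma_i)=\E[v_i]$ together with the per-signal budget $\sum_i \pw_i(\sigma)\le 1$ to derive an inequality bounding the total per-scale signaling mass $\Omega$ can commit. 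Third, a pigeonhole over the $L$ scales yields a scale $\ell^\ast$ at which $\Omega$ is under-allocated by a factor of at least $L/3$, and the corresponding $k_{\ell^\ast}$-smallest-utility sum of $\Omega$ falls below $Z_{k_{\ell^\ast}}/((\log\log V)/3)$, contradicting the putative majorization.

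The principal obstacle is step two: because the selection rule is arbitrary, no optimizing-receiver argument disciplines $\Omega$, and the only structural constraints are Bayesian consistency per agent and the unit selection budget---so the delicate task is to turn these into a clean per-scale resource accounting whose cost cannot be ``refunded'' by clever choices at other scales. A secondary challenge is calibrating the factor $3$ in the denominator, which should absorb constant-factor losses from (a) converting Bayesian consistency into a threshold-crossing bound on the mass of high-posterior signals, (b) averaging across scales in the pigeonhole, and (c) comparing the realized $\Omega$-profile to the exact benchmark $Z_k$. The overall argument is in the spirit of ``information must be paid for separately at each scale'' lower bounds, but carried out purely through Bayes-consistency of independent per-agent signaling rather than through any communication or entropy machinery.
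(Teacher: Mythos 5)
Your plan is a genuinely different construction from the paper's, and I believe the instance you propose cannot yield any super-constant lower bound. The paper's hard instance has $n$ agents (so $V=n+1$, \emph{exponentially} more agents than $\log\log V$), where agent $i$ takes value $i+1$ with probability $1/i$ and value $1$ otherwise; the crucial feature is that the high-value probabilities $\{1/i\}_i$ form a \emph{harmonic} series of total mass $\Theta(\log n)$, so the high-value events of distinct agents overlap substantially and genuinely compete for the unit selection budget. In your construction with only $L=\Theta(\log\log V)$ agents at doubly-exponential scales $V_\ell\approx 2^{2^\ell}$, matching all prior means to a common $\mu_0$ forces $p_\ell=\Theta(1/V_\ell)$, and these probabilities decay super-geometrically with $\sum_\ell p_\ell=O(1)$, so the high-value events are essentially disjoint. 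That is fatal rather than helpful: the single policy ``\fui{} plus select the agent with the largest realized value'' already gives every agent $\ell$ utility at least $V_\ell\, p_\ell\prod_{j>\ell}(1-p_j)=\Theta(\mu_0)$, i.e.\ within a constant factor of the universal cap $U_\ell\le \E[v_\ell]=\mu_0$. Since $Z_k\le k\mu_0$ for every $k$ (no agent's utility can exceed its prior mean), that one policy is already $O(1)$-majorized on your instance, and no pigeonhole over scales can extract a $\log\log V$ factor. The ``scale budgets are effectively disjoint'' property you intended to be the source of hardness is in fact exactly what makes the instance easy.

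For comparison, the paper's proof proceeds by (i) exhibiting, for each $k\in[n]$, a policy $S_k$ that reveals full information on agents $1,\dots,k$ and partially reveals the high-value events of agents $k+1,\dots,n$, achieving a $k$-smallest-utilities sum of at least $R_k:=\tfrac12\cdot\tfrac{k+1}{1+H_{-(k+1)}}$ with $H_{-(k+1)}=\sum_{i>k}\tfrac1{i+1}$; (ii) bounding, for any candidate policy, $U_i\le(i+1)w_i$ where $w_i$ is agent $i$'s marginal selection probability and $\sum_i w_i\le 1$; and (iii) combining the resulting constraints $\sum_{i\le k}(i+1)w_i\ge R_k/\alpha$ via summation by parts to get $\sum_i w_i\ge\tfrac1{3\alpha}\log\log(n+1)$, forcing $\alpha\ge\tfrac13\log\log V$. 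Your three-step skeleton (benchmark family, per-scale budget accounting, pigeonhole) is the same in spirit, but the accounting in step (ii) only binds because the ``budget effectiveness'' $i+1$ of $w_i$ grows \emph{linearly} while the targets $R_k$ decay only logarithmically; with doubly-exponential budget effectiveness $V_\ell$, a vanishing total budget $\sum w_\ell=O(1/V_1)$ already meets all targets, so the argument cannot be carried out on a $\Theta(\log\log V)$-agent instance with geometrically decaying probabilities.
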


\section{Warmup: The \fui{} \SGP{}} 
\label{sec:full}
In this section, we consider the \fui{} \sgp{}, in which the sender reveals all the values $\vec{v} = \{v_i\}_{i=1}^n$ to the receiver. The sender can still design the \slr{} based on the fairness objective. In this case, the \slr{} specifies which agent -- among the ones that have the highest value -- the receiver should select.


We propose a formulation of the problem based on network flow, and en route develop techniques that will also be useful in \cref{sec:approx}. In this section, using these techniques, we show that \fui{} can be $1$-majorized when the value distributions are Bernoulli (but can be heterogeneous). We then show that for general value distributions, when restricted to the space of \fui{} \sgps{}, there is a $2$-majorized, polynomial-time-computable \sgp{}.


\subsection{\texorpdfstring{$1$}{1}-Majorization for Bernoulli Distributions\texorpdfstring{: Proof of \cref{thm:full1}}{}}
\label{sec:network_flow}
Consider the Bernoulli setting in which each value distribution $D_i$ is $\ber(\pb_i)$ (which takes value $1$ with probability $\pb_i$ and takes value $0$ otherwise). We first show in \cref{lem:convert_to_full} that any \sgp{} can be converted to using a \fui{} \sgp{} without decreasing the utility of any agent. This reduction allows us to only care about \fui{} policies. We subsequently map each \fui{} policy to a network flow instance, thus enabling us to apply the majorization result for network flows~\cite{veinott1971least}.

To prove \cref{lem:convert_to_full}, we construct a new \fui{} signaling policy $\Omega'$ that reveals the full value vector of all the agents to the receiver. We carefully set the selection policy of $\Omega'$ so that each agent's winning probability conditioned on their value being $1$ does not decrease in $\Omega'$ compared to $\Omega$. 

\begin{lemma}
\label{lem:convert_to_full}
    When $D_i = \ber(\pb_i)$, for any \sgp{} $\Omega$, there exists a \fui{} \sgp{} $\Omega'$ in which the utility of each agent is at least their utility in $\Omega$.
\end{lemma}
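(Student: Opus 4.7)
The plan is to exhibit an explicit transformation from an arbitrary \sgp{} $\Omega$ to a \fui{} \sgp{} $\Omega'$ that weakly improves every agent's utility. The key observation for Bernoulli distributions is that $U_i(\Omega) = \Pr[v_i = 1 \text{ and agent } i \text{ is selected}]$, since a value-$0$ agent contributes nothing to its own utility even if chosen. Consequently, any selection probability mass that $\Omega$ places on an agent whose realized value is $0$ is "wasted" and can be redirected to value-$1$ agents without affecting anyone's contribution negatively.

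First I would, for the fixed policy $\Omega$, marginalize over the signal and selection randomness to obtain, for each realized value vector $\vec v \in \{0,1\}^n$, the induced probability $u_i(\vec v)$ that $\Omega$ selects agent $i$ conditioned on $\vec v$. These quantities satisfy $\sum_i u_i(\vec v) = 1$, and $U_i(\Omega) = \sum_{\vec v : v_i = 1} \Pr[\vec v]\, u_i(\vec v)$. Next I would define $\Omega'$ as \fui{} paired with the following \slr{}: given the revealed vector $\vec v$, let $S(\vec v) = \{i : v_i = 1\}$; if $S(\vec v) = \emptyset$, select arbitrarily (utility is $0$ anyway), and otherwise choose agent $i \in S(\vec v)$ with probability $p_i(\vec v) := u_i(\vec v) + \alpha_i(\vec v)$, where $\{\alpha_i(\vec v)\}_{i \in S(\vec v)}$ is any nonnegative distribution of the leftover mass $\sum_{j \notin S(\vec v)} u_j(\vec v)$. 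By construction $\sum_{i \in S(\vec v)} p_i(\vec v) = 1$ and $p_i(\vec v) \ge u_i(\vec v)$ whenever $v_i = 1$.

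Finally I would check two things. First, $\Omega'$ is a valid \sgp{}: under \fui{}, the posterior mean of agent $i$ equals $v_i \in \{0,1\}$, so the argmax set is exactly $S(\vec v)$ (when nonempty), and the \slr{} above indeed picks from this set. Second, comparing utilities: $U_i(\Omega') = \sum_{\vec v : v_i = 1} \Pr[\vec v]\, p_i(\vec v) \ge \sum_{\vec v : v_i = 1} \Pr[\vec v]\, u_i(\vec v) = U_i(\Omega)$, as desired. There is no real obstacle here; the whole argument hinges on the lossless-redistribution property enabled by the $\{0,1\}$ support, and the rest is bookkeeping. This reduction will then justify restricting attention to \fui{} policies when proving \cref{thm:full1} via the network-flow majorization machinery.
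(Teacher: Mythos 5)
Your proposal is correct and takes essentially the same approach as the paper's proof: the paper defines $q_i(\sigma)$ as the selection probability conditioned on the signal and builds a selection rule for $\Omega'$ that internally simulates $\Omega$'s signal draw, whereas you marginalize over signals up front to get $u_i(\vec v)$ — these are the same object, just presented before versus after integrating out $\sigma$. The key observation (selection mass on value-$0$ agents is wasted and can be redistributed to value-$1$ agents without hurting anyone) and the final bookkeeping inequality are identical.
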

\begin{proof}
    Given any \sgp{} $\Omega$, let $\Gamma$ be the set of all possible signals that can be sent to the receiver in $\Omega$. We will construct a \fui{} signaling policy $\Omega'$ that preserves the utilities of all agents. 
    
    For each $\sigma \in \Gamma$, recall that $D_i(\sigma)$ is the receiver's posterior Bernoulli distribution over agent $i$'s value. 
    Denote by $q_i(\sigma)$ the conditional probability that agent $i$ is selected when $\sigma$ is sent to the receiver. 
    
    In $\Omega$, suppose the sender observes the value vector of agents as $\vec{v}_0$ and sends signal $\sigma \in \Gamma$. Then, in $\Omega'$, the sender reveals $\vec{v}_0$ to the receiver. Suppose that $S(\vec{v}_0)$ is the set of agents with value $1$ in $\vec{v}_0$. In $\Omega'$, the selection policy of the receiver chooses $i \in S(\vec{v}_0)$ with probability $q_i(\sigma)$. If $\sum_{i \in S(\vec{v}_0)} q_i(\sigma) < 1$, the selection probabilities for $S(\vec{v}_0)$ are arbitrarily increased so that they sum to $1$. Denote $v^i_0$ as the $i^{th}$ component of $\vec{v_0}$.  Noting that an agent only gains utility when their value is $1$, in the new \sgp{} $\Omega'$, the expected utility of agent $i$ is 
    \begin{align*}
    U_i(\Omega') &\ge \sum_{\vec{v}_0 :\ v^i_0 = 1}  \Pr[\vec{v} = \vec{v}_0] \cdot \left(\sum_{\sigma \in \Gamma} \Pr[\text{$\sigma$ is sent in $\Omega$} \mid \vec{v} = \vec{v}_0] \cdot q_i(\sigma)  \right)\\
    &= \sum_{\sigma \in \Gamma} \left(\sum_{\vec{v}_0 :\ v^i_0 = 1} \Pr[\text{$\sigma$ is sent in $\Omega$} , \ \vec{v} = \vec{v}_0] \cdot q_i(\sigma) \right)\\
    &= \sum_{\sigma \in \Gamma} \left(\sum_{\vec{v}_0 :\ v^i_0 = 1} \Pr[\vec{v} = \vec{v}_0 \mid \textrm{$\sigma$ is sent in $\Omega$}] \right) \cdot \Pr[\text{$\sigma$ is sent in $\Omega$}] \cdot q_i(\sigma) \\
    &= \sum_{\sigma \in \Gamma} \Pr[v_i = 1 \mid \textrm{$\sigma$ is sent in $\Omega$}] \cdot \Pr[\textrm{$\sigma$ is sent in $\Omega$}] \cdot q_i(\sigma).    
    \end{align*}
     In $\Omega$, conditioned on $\sigma$ being sent, the expected utility of agent $i$ is $q_i (\sigma) \cdot \E[D_i(\sigma)]$. Since $D_i(\sigma)$ is Bernoulli, we have $\E[D_i(\sigma)] = \Pr[v_i = 1 \mid \textrm{$\sigma$ is sent in $\Omega$}]$. Therefore,
    \[
    U_i(\Omega) = \sum_{\sigma \in \Gamma}  \Pr[v_i = 1 \mid \textrm{$\sigma$ is sent in $\Omega$}] \cdot \Pr[\textrm{$\sigma$ is sent in $\Omega$}] \cdot q_i(\sigma) \le U_i(\Omega').
    \]
    This completes the proof.
\end{proof}

\paragraph{Reduction to Network Flow.} We now construct the network flow instance from the \fui{} policies. We first build a bipartite directed flow graph $G$ as follows:

\begin{itemize}
    \item There is a single source $s$. Place a set $R$ of $n$ sinks $t_{1}, \ldots, t_{n}$ where $t_{i}$ represents the agent $i$. 
    \item Place a set $L$ of $2^n$ nodes where each subset $T \subseteq [n]$ corresponds to a node $s_T$. 
    \item Add an edge $(s, s_T)$ with capacity $\pb_T = \prod_{i \in T} \pb_i \cdot \prod_{i \notin T} (1 - \pb_i)$. For each agent $i \in T$, add an edge $(s_T, t_i)$ with capacity $+\infty$.  
\end{itemize} 

We illustrate this construction on an example with $n = 2$ agents in \Cref{fig:network_1}. Their value distributions are $D_1 = \ber(0.3)$ and $D_2 = \ber(0.6)$. Since there are two agents, we have $4$ nodes in $L$, denoted by $s_{\emptyset}$, $s_{\{1\}}$, $s_{\{2\}}$ and $s_{\{1,2\}}$. By the construction above, we have the capacities on edges from $s$ to these four nodes are $(1 - 0.3) \times (1 - 0.6) = 0.28$, $0.3 \times (1 - 0.6) = 0.12$, $(1 - 0.3) \times 0.6 = 0.42$ and $0.3 \cdot 0.6 = 0.18$, respectively. 


\begin{figure}[htbp]
    \centering
    \includegraphics[scale = 1.0] {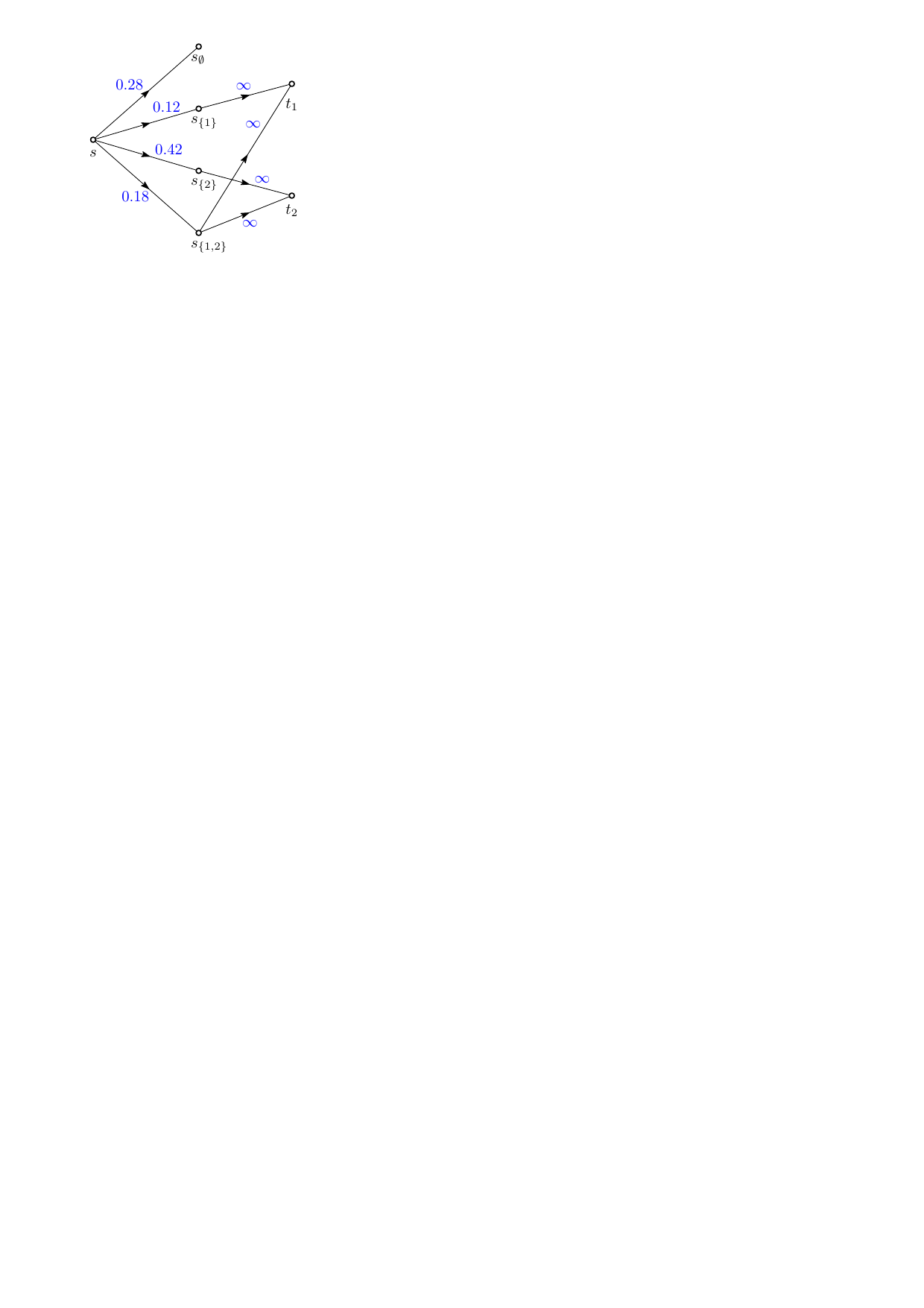}
    \caption{The network flow instance for an example with $n = 2$ agents, with $D_1 = \ber(0.3)$ and $D_2 = \ber(0.6)$. The number in blue next to each directed edge indicates the capacity of the edge. }
    \label{fig:network_1}
\end{figure}

The intuition for the construction and  \cref{lem:network_bernoulli} is the following. Each node in $L$ corresponds to a possible value vector of all agents, or equivalently, the set of agents whose values are $1$. The flow value from the source to such node is capped by the probability that this value vector is realized. The flow value from a node in $L$ to an agent node equals the probability of the event that the vector corresponding to $L$ is realized and this agent is selected.

\begin{lemma} \label{lem:network_bernoulli}
The set of feasible $s$--$t$ maximum flows in $G$ corresponds exactly to the set of \fui{} \sgps{}. For any policy $\Omega$, the utility $U_i(\Omega)$ of agent $i$ is equal to the total flow $f_i$ to the sink $t_i$ in $G$ in the corresponding flow instance.
\end{lemma}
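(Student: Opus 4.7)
The plan is to establish a correspondence in two directions: from Full Revelation signaling policies to maximum flows, and back, and then verify the utility identity $U_i(\Omega)=f_i$ by direct computation.

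First, I would parameterize any Full Revelation policy $\Omega$ by its selection rule. Since the sender reveals $\vec v$ exactly and values are in $\{0,1\}$, the posterior pins down the set $T=\{i: v_i=1\}\subseteq[n]$, and the receiver must pick some agent from $T$ (when $T\ne\emptyset$). Thus a Full Revelation policy is specified by, for each nonempty $T$, a distribution $q(\cdot\mid T)$ on $T$, i.e., numbers $q_i(T)\ge 0$ with $\sum_{i\in T}q_i(T)=1$. Given such a policy, I would define a flow by setting $f(s,s_T)=\pb_T$ for every $T$ with $T\ne\emptyset$ (and $f(s,s_\emptyset)=0$), and $f(s_T,t_i)=\pb_T\cdot q_i(T)$ for each $i\in T$. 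The capacity constraints are immediate from $\pb_T=\Pr[\vec v\text{ realizes }T]$ and the edges $(s_T,t_i)$ having infinite capacity; flow conservation at $s_T$ follows from $\sum_{i\in T}q_i(T)=1$. Finally, the total flow leaving $s$ is $\sum_{T\ne\emptyset}\pb_T=1-\pb_\emptyset$, which is the value of any maximum $s$--$t$ flow since $s_\emptyset$ has no outgoing edges and hence can carry no flow at all.

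For the converse direction, I would take any maximum $s$--$t$ flow $f$ in $G$ and show it comes from a Full Revelation policy in this way. The capacity cut across the edges out of $s$ gives max-flow value at most $\sum_{T\ne\emptyset}\pb_T=1-\pb_\emptyset$. Since we already exhibited flows achieving this bound, every maximum flow must saturate $f(s,s_T)=\pb_T$ for every nonempty $T$. Conservation at $s_T$ then forces $\sum_{i\in T}f(s_T,t_i)=\pb_T$, so defining $q_i(T):=f(s_T,t_i)/\pb_T$ yields a valid probability distribution on $T$, which together with the Full Revelation mapping rule gives a well-defined policy whose associated flow (by the construction of the first paragraph) is exactly $f$. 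This establishes the bijection.

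The utility identity is then a one-line calculation. By \cref{def:expected_utility}, for a Full Revelation policy the signal $\sigma$ is just $\vec v$, the posterior mean $\mu_i(\sigma)$ equals $v_i\in\{0,1\}$, and $\pw_i(\sigma)=q_i(T)$ when $T=\{j:v_j=1\}$, so
\[
U_i(\Omega)=\sum_{T\ni i}\pb_T\cdot q_i(T)\cdot 1=\sum_{T\ni i}f(s_T,t_i)=f_i.
\]
The main thing to be careful about is the handling of $s_\emptyset$ and of the capacity bound on the max flow: one must argue that although the capacities $\pb_T$ leave slack in general, a \emph{maximum} flow forces saturation on all $T\ne\emptyset$, which is precisely what lets us read off a valid (probability-one) selection distribution from the flow. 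Everything else is direct bookkeeping.
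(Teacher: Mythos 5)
Your proof is correct and takes essentially the same approach as the paper: construct a flow $f(s,s_T)=\pb_T$, $f(s_T,t_i)=\pb_T\cdot q_i(T)$ from a \fui{} policy, observe it attains the max-flow value $\sum_{T\ne\emptyset}\pb_T$, and in the reverse direction read off a selection distribution $q_i(T)=f(s_T,t_i)/\pb_T$ from a saturated max flow. The one thing you do a bit more carefully than the paper's terse proof is to justify why every max flow must saturate every edge $(s,s_T)$ with $T\ne\emptyset$ (and that $s_\emptyset$ necessarily carries zero flow), which is the step that lets you read off a genuine probability-one selection rule; the paper simply asserts this.
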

\begin{proof}
We first prove that any \sgp{} $\Omega$ can be converted to a maximum flow. Fix any set $T \subseteq [n]$, and consider the event in which the agents with value $1$ exactly form the set $T$. Let $z_{Ti}$ be the probability that conditioned on this event, the agent $i \in T$ is selected in $\Omega$. We set flow $f_{Ti}$ on the edge $(s_T, t_i)$ to be $\pb_T \cdot z_{Ti}$. Note that $U_i(\Omega) = \sum_{T : i \in T} \pb_T \cdot z_{Ti} = \sum_{T : i \in T} f_{Ti} = f_i$. The total flow is $\sum_{T \neq \varnothing} \pb_T$, which means that it is a maximum flow.

For the other direction, given a maximum flow, let $T$ denote the set of agents with value $1$. Notice that when $T \neq \varnothing$, the flow from $s$ to $s_T$ must be equal to its capacity. We construct a \fui{} \sgp{} $\Omega$ by requiring that conditioned on $T$ being the set of agents with value $1$, the receiver selects $i \in T$ with probability $f_{Ti} / \pb_T$, where $f_{Ti}$ is the flow on the edge $(s_T, t_i)$. This is clearly a feasible signaling policy with $U_i(\Omega) = \sum_{T : i \in T} \pb_T \cdot (f_{Ti} / \pb_T) = f_{i}$.
\end{proof}

The following theorem (implicit in~\cite{veinott1971least}) shows the existence of a $1$-majorized flow, and therefore shows the existence of a $1$-majorized \fui{} \sgp{}. We include a short proof for completeness.

\begin{theorem}[$1$-Majorized Flows, Implicit in \cite{veinott1971least}]
\label{thm:bern_main}
Given any capacitated network with a single source and multiple sinks $t_1, \ldots, t_n$, let $f = \{f_i\}_{i=1}^n$ be the flows entering the respective sinks in a feasible flow. Then there is a flow $f^*$  that is $1$-majorized among these feasible flows.
\end{theorem}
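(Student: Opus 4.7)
The plan is to reduce the problem to a classical polymatroid structure, construct the $1$-majorized flow by an iterative ``max-min peeling'' procedure, and verify majorization by a tight-cut argument. First, I would invoke max-flow/min-cut to observe that a nonnegative vector $(f_1,\ldots,f_n)$ is realizable as the sink-receipt vector of some feasible flow in $G$ if and only if $\sum_{i \in S} f_i \leq g(S)$ for every $S \subseteq [n]$, where $g(S)$ is the capacity of a minimum $s$-to-$\{t_i : i \in S\}$ cut. Since $g$ is monotone and submodular (a standard property of cut functions with a single source), the set of feasible sink vectors is exactly the polymatroid $P(g) = \{f \geq 0 : \sum_{i \in S} f_i \leq g(S)\ \text{for all}\ S \subseteq [n]\}$, and turning a point $f \in P(g)$ back into an actual flow is a routine superposition step.

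Next, I would construct the candidate $f^*$ iteratively. Set $v_1 = \min_{S \neq \emptyset} g(S)/|S|$; by LP duality, this equals $\max\{v : \exists f \in P(g)\ \text{with}\ f_i \geq v\ \text{for all}\ i\}$. Let $T_1$ be the maximal minimizer of $g(S)/|S|$, which is well-defined because submodularity of $g$ makes the family of minimizers closed under union. Set $f^*_i = v_1$ for all $i \in T_1$, contract $T_1$, and recurse on the reduced polymatroid over $[n] \setminus T_1$ with cut function $S \mapsto g(S \cup T_1) - g(T_1)$, which is again submodular. Iterating gives a strictly increasing sequence $v_1 < v_2 < \cdots$ and a partition $T_1, T_2, \ldots$ of $[n]$, from which I define $f^*_i = v_\ell$ for $i \in T_\ell$.

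Finally, I would show $1$-majorization. By construction every prefix $U_\ell := T_1 \cup \cdots \cup T_\ell$ is tight at $f^*$, i.e.\ $\sum_{i \in U_\ell} f^*_i = g(U_\ell)$, and because $v_1 < v_2 < \cdots$ the $|U_\ell|$ smallest entries of $f^*$ are exactly those indexed by $U_\ell$. Hence for any feasible $f$ and $k = |U_\ell|$, letting $x_{(i)}$ denote the $i$-th smallest entry of $x$, $\sum_{i=1}^{k} f_{(i)} \leq \sum_{i \in U_\ell} f_i \leq g(U_\ell) = \sum_{i=1}^{k} f^*_{(i)}$. For $k$ interior to a level (strictly between two successive prefix sizes $|U_{\ell-1}|$ and $|U_\ell|$), the fact that all $f^*$-values on $T_\ell$ equal the common value $v_\ell$ lets a short interpolation extend the inequality. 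The main obstacle is reliably isolating these tight prefixes $U_\ell$ and guaranteeing that the procedure produces them at every step; this is exactly where submodularity of $g$ is used twice, first to give a unique maximal minimizer at each level, and second to ensure the contracted cut function is again submodular so the recursion is well-defined.
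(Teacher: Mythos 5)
Your proposal is correct and follows essentially the same route as the paper: identify the set of feasible sink-flow vectors as a polymatroid, then take its lexicographically optimal (least-supermajorized) base. Where the paper simply cites Megiddo (1974) for the polymatroid structure and Tamir (1995)/Dutta--Ray (1989) for the existence of a $1$-majorized point, you re-derive both ingredients from scratch --- the polymatroid characterization via max-flow/min-cut, and the $1$-majorized point via the max-min ``principal-partition'' peeling and the tight-prefix-plus-convex-interpolation argument --- which is precisely the content underlying the cited results.
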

\begin{proof}
    By \cite[Lemma 4.1]{megiddo1974optimal}, the set of feasible $\{f_i\}_{i=1}^n$ defines a polymatroid -- that is, for all $T \subseteq [n]$, it holds that $\sum_{i \in T} f_i \le g(T)$ and $\{f_i\}_{i=1}^n \ge 0$, where $g$ is a non-negative, monotone, submodular function. By \cite[Theorem 3.2]{tamir1995least} (see also~\cite{dutta1989concept}), any polymatroid has a $1$-majorized point.
\end{proof}

\paragraph{Polynomial-Time Algorithm.}
Though the graph $G$ constructed above has exponential size, it follows from~\cite{megiddo1974optimal} that a flow $\{f_i\}_{i=1}^n$ is feasible if and only if it is feasible for the following polymatroid $\P$:
\begin{alignat*}{2}
    \sum_{i \in T} f_i &\le \sum_{J : J \cap T \neq \varnothing} \pb_{J} = 1 - \prod_{i \in T} (1 - \pb_i), & \qquad & \forall T \subseteq [n];\\
    f_{i} &\in [0,1], & \qquad & \forall i \in [n].
\end{alignat*}

Maximizing any strictly concave, symmetric, separable function over this polymatroid now yields the solution in polynomial time~\cite{tamir1995least,veinott1971least}. (See~\cite{goel2006simultaneous} for an LP based approach.) This can be solved to an arbitrary approximation in polynomial time. 




The final solution $f^*$ for $G$ lies in the convex hull of the vertices of $\P$. By Carath\'eodory's Theorem, this point can be written as the convex combination of at most $n + 1$ vertices on the convex hull. By applying an ellipsoid method, we can find such a decomposition in polynomial time. Denote the decomposed vertices and their weights by $\{(w_k, \eta_k)\}_{k = 1}^{n + 1}$, where $\sum_{k = 1}^{n + 1} \eta_k = 1$.

By the property of polymatroids, each vertex $u_t$ corresponds to a series of ``tight subsets'' $\varnothing = T_{0} \subsetneq T_{1} \subsetneq T_{2} \subsetneq \cdots \subsetneq T_{n} = [n]$, where each adjacent pair of sets $(T_{k-1}, T_{k})$ only differ by a single agent $i_{k}$. By a ``tight subset'' $T$ we mean that the inequality $\sum_{i \in T} f_i \le 1 - \prod_{i \in T} (1 - \pb_i)$ is tight at this vertex. Since $f_i$ is the utility of agent $i$ for this vertex solution, as long as at least one agent in $T_k$ has value $1$, some agent in this set must be finally selected. This yields the following ranking scheme for $w_t$:
\begin{itemize}
    \item [1.] The sender reveals all indices $\{i_{k}\}_{k = 1}^n$.
    \item [2.] The receiver selects the first agent $i_{k}$ in the ranking whose value is $1$.
\end{itemize}

The  signaling scheme corresponding to $f^*$ is a randomization over these ranking schemes: With probability $\eta_k$, run the ranking mechanism corresponding to the vertex $w_k$. Since the utilities of the agents are exactly $\{f^*_i\}_{i=1}^n$, this is $1$-majorized. This completes the proof of \cref{thm:full1}.

\subsection{Majorization among \texorpdfstring{\textsc{Full Information}}{Full Information} Schemes\texorpdfstring{: Proof of \cref{thm:full3}}{}}
\label{sec:full_maj}
We now show that if we restrict to the space of \textsc{Full Information} mapping policies, there is a selection policy that is $2$-majorized and can be computed in polynomial time. Our technique again reduces to network flow, albeit for a relaxation of the problem. We will use a similar relaxation in \cref{sec:approx} for the general problem.



We assume that there are $m$ possible values in total that support the value distributions $\{D_i\}_{i = 1}^n$. In the remainder of this section, we use the notation $v_1 < v_2 < \cdots < v_m$ to denote these $m$ possible values. Let $d_{ij} = \Pr[D_i = v_j]$. Let $Z_j$ be the event that the largest value is at most $v_j$. We further define
\[
z_j := \Pr[Z_j] = \prod_{i = 1}^n \left(\sum_{j' \leq j}{d_{ij'}}\right)
\]
and
\[
p_{ij} := \Pr[D_i = v_j \mid Z_j] = \frac{d_{ij}}{\sum_{j' \leq  j} d_{ij'}}.
\]



\paragraph{Network Flow Instance.}
We will show a network flow instance so that any \fui{} policy $\Omega$ can be relaxed into a feasible flow for this instance. Given $\Omega$, let $x_{ij}$ be the probability that agent $i$ has value $v_j$ and is selected conditioned on the event $Z_j$. 
For the policy $\Omega$, the variables $\{x_{ij}\}$ and expected utilities $\{u_i\}_{i = 1}^n$ satisfy the following constraints.

\begin{alignat}{2}
    \sum_{j=1}^m z_j \cdot v_j  \cdot x_{ij} &= u_i,  & \qquad & \forall i \in [n]; \label{eqn:constraint_1} \\
    \sum_{i = 1}^n {x_{ij}} &\le 1, & \qquad & \forall j \in [m]; \label{eqn:constraint_2}\\
    0 \le x_{ij} &\le p_{ij}, & \qquad & \forall i \in [n], j \in [m]. \label{eqn:constraint_3}
\end{alignat}
Conditioned on the event $Z_j$, \cref{eqn:constraint_2} says at most one agent with value $v_j$ is selected, and \cref{eqn:constraint_3} says the probability that agent $i$ has value $v_j$ and is selected is at most $p_{ij}$, the probability that its value is $v_j$.

Let $y_{ij} = z_j \cdot v_j  \cdot x_{ij}$, and $b_j = z_j \cdot v_j$. We can rewrite the above program as:
\begin{alignat*}{2}
\sum_{j=1}^m y_{ij} &= u_i, & \qquad & \forall i \in [n]; \\
\sum_{i = 1}^n {y_{ij}} &\le b_j, & \qquad & \forall j \in [m]; \\ 
0 \le y_{ij} &\le p_{ij} \cdot b_j, & \qquad & \forall i \in [n], j \in [m].  
\end{alignat*}

Construct a network flow graph $G$ with a source $s$, and one sink $t_i$ for each agent $i$. There are $m$ nodes, one for each $v_j$. There is a directed edge $(s,v_j)$ with capacity $b_j$, and a directed edge $(v_j, t_i)$ with capacity $p_{ij} \cdot b_j$. The flow value on the edge $(v_j, t_i)$ is $y_{ij}$, and sink $t_i$ receives flow $u_i$. This shows the network flow instance such that any $\Omega$ relaxes to a feasible flow for this instance, and the structure of the network is illustrated in \cref{fig:network_2}.

\begin{figure}[htbp]
    \centering
    \includegraphics[scale = 1.0] {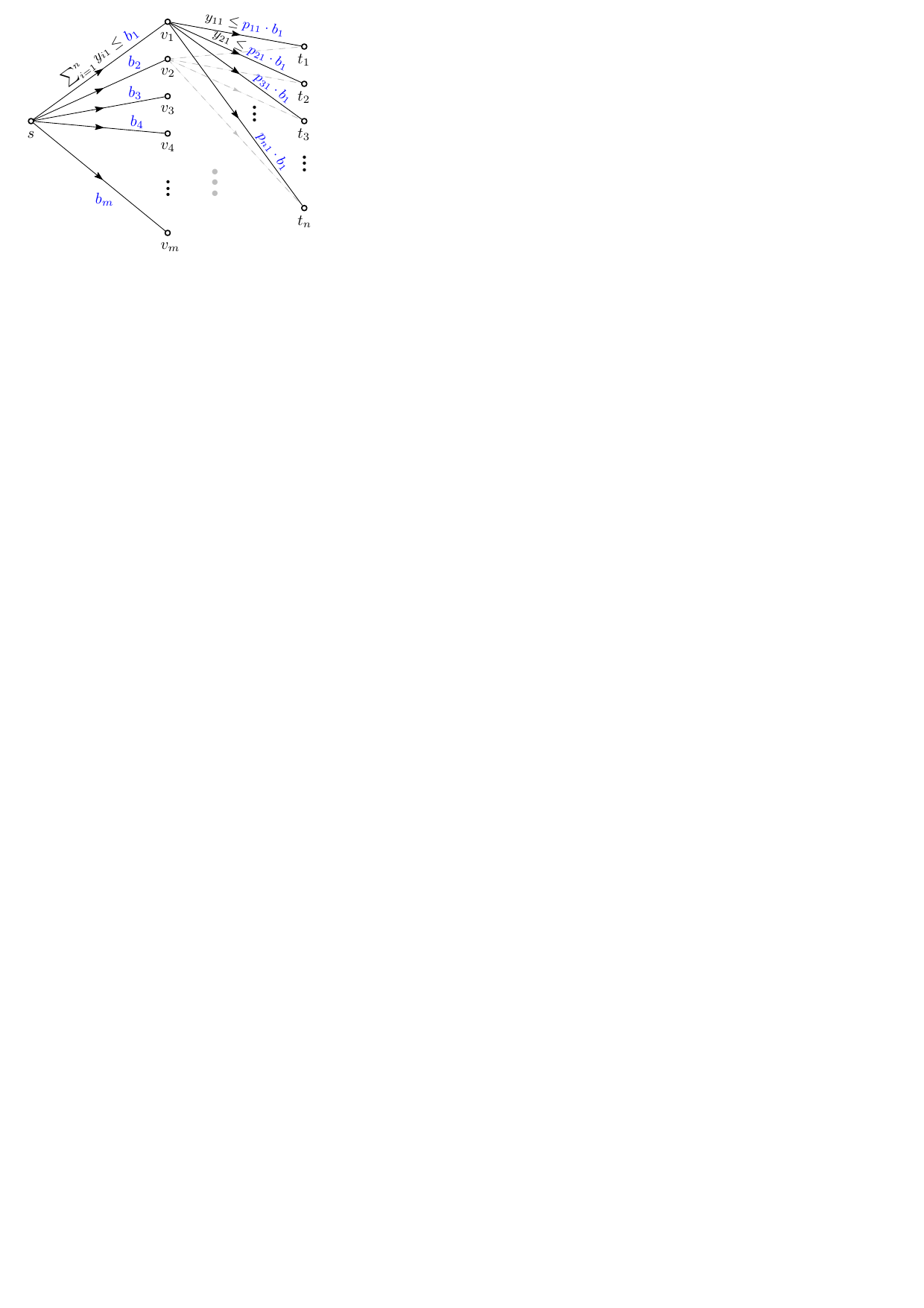}
    \caption{Illustration of the network flow instance. The text in blue denotes the capacity of each edge, while the text in black denotes the feasible flow value on each edge.  }
    \label{fig:network_2} 
\end{figure}

Using \cref{thm:bern_main}, there is a $1$-majorized flow in $G$, which translates to a $1$-majorized utility vector $\vec{u}$. Further, this solution can be computed in polynomial time~\cite{tamir1995least,veinott1971least,goel2006simultaneous}. Denote the majorized solution (in the original variables) as $\{x_{ij}\}$ and utilities as $\{u_i\}$. 

\paragraph{Signaling Policy.} We now convert the $1$-majorized solution to the above flow instance into a simple $2$-majorized \sgp{}. This is inspired by the algorithm for stochastic knapsack~\cite{DBLP:journals/mor/DeanGV08}. 

\begin{itemize}
    \item The sender sends all realized values to the receiver.
    \item Suppose that the largest realized value is $v_j$. Let $S$ denote the set of agents $i$ with value $v_j$.
    \item Sort the agents in $S$ in a uniformly random order, and consider the agents in this order.
    \item In this order, when we reach each agent $i$, with probability $x_{ij} / p_{ij}$, select $i$ and stop.
    \item If no agent in $S$ is selected above, then select an arbitrary agent in $S$.
\end{itemize}

\paragraph{Analysis.} Consider any agent $i$. Conditioned on the event $Z_j$ (i.e., the largest value being at most $v_j$), the probability that another agent $i' \in S$ is selected before agent $i$ is reached is at most
\[
\frac{1}{2} \cdot \frac{x_{i'j}}{p_{i'j}} \cdot p_{i'j} = \frac{x_{i'j}}{2},
\]
where the factor of $1/2$ comes from the fact that the order is uniformly random. Since $\sum_{i'} x_{i'j} / 2 \le 1 / 2$, the probability that no agent before $i$ is selected is at least $1 / 2$. Now, for agent $i$ with value $v_j$ to be selected, the following should happen: 
\begin{itemize} 
\item The event $Z_j$, which happens with probability $z_j$; 
\item Agent $i$ has value $v_j$ conditioned on $Z_j$, which happens with probability $p_{ij}$; 
\item No agent before $i$ is selected conditioned on all previous events, which happens with probability at least $1/2$; and
\item Agent $i$ with value $v_j$ is selected conditioned on all previous events, which happens with probability $x_{ij}/p_{ij}$.
\end{itemize}
Therefore, the unconditional probability that agent $i$ with value $v_j$ is selected is 
\[
\Pr[\text{agent } i \text{ with value } v_j \text{ is selected}] \ge z_j \cdot \frac{1}{2} \cdot p_{ij} \cdot \frac{x_{ij}}{p_{ij}} = z_j  \cdot \frac{x_{ij}}{2}.
\]
Therefore, the expected utility of agent $i$ in this scheme is at least: 

\begin{equation}
\label{eqn:utility1}
\sum_{j = 1}^m \Pr[\text{agent } i \text{ with value } v_j \text{ is selected}] \cdot v_j \ge \sum_{j = 1}^m z_j \cdot \frac{x_{ij}}{2} \cdot v_j = \frac{u_i}{2}.
\end{equation}

This shows a $2$-majorized signaling scheme running in polynomial time, proving \cref{thm:full3}.

\section{Approximate Majorization\texorpdfstring{: Proof of \cref{thm:approx}}{}}
\label{sec:approx}
We now present our main technical result -- approximate majorization for general \sgps{}. The key hurdle with using the approach in \cref{sec:network_flow,sec:full_maj} is that unlike the \fui{} setting where the \mpr{} is fixed, any mathematical program for general policies needs to encode both mapping and selection variables, and their interaction is non-linear. To deal with the challenge, we will perform a \emph{projection} of any \sgp{} onto a specific class of \sgm{} policies, for which the network flow majorization approach in the previous sections can be employed.  


\subsection{\sgm{} Projections}
\label{sec:reduction}
Given a small constant $\epsilon > 0$, let $\eta = 1+\epsilon$. Without loss of generality, assume $V$ is a power of $\eta$, and divide the interval $[1, V]$ into buckets $I_1 = [1,\eta), I_2 = [\eta, \eta^2 ), \ldots, I_K = [V/\eta, V]$, where $K \approx (\log V) / \epsilon$. For any policy $\Omega^*$, let $u_i$ denote agent $i$'s utility, that is, $u_i = U_i(\Omega^*)$. We can express $u_i$ as the sum of contributions from different buckets as $u_i = \sum_{k=1}^K c_{ik}$, where $c_{ik}$ is the contribution to the utility $u_i$ from those signals $\sigma$ for which the posterior mean satisfies $\mu_i(\sigma) := \E[D_i(\sigma_i)] \in I_k$. 




We now define \sgm{} projections that assume that the only contribution to utility for agent $i$ happens when the signal $\sigma$ received has posterior mean $\mu_i(\sigma) \in I_k$ and agent $i$ is selected.


\begin{definition} [\sgm{} projection]
Given an interval $I_k$ and \sgp{} $\Omega$, let $A_{ik}$ denote the event where both (I) agent $i$ is selected and (II) $\E[D_i(\sigma)] \in I_k$.  
The \sgm{} projection $\Omega_k$ of $\Omega$ onto $I_k$ only counts utilities from the event $\bigcup_{i = 1}^n A_{ik}$, and sets the remaining utilities to $0$. In other words, if $c_{ik}(\Omega) = \Pr[A_{ik}] \cdot \E[\mu_i(\sigma) \mid A_{ik}]$, then $U_i(\Omega_k) = c_{ik}$ for all agents $i$.  
\end{definition}

Consider now the policy that picks a number $k \in \{1,2, \ldots, K\}$ uniformly at random and executes the \sgm{} projection $\Omega_k$ of $\Omega^*$ for this interval. The expected utility of agent $i$ is exactly $\sum_{k=1}^K c_{ik} / K = u_i / K$. We overload the notation and denote the new utilities as $\{u_i\}_{i = 1}^n$.

\subsection{Structure of \sgm{} Projections}
\label{sec:structure}
We now show that \sgm{} projections can be modified to have a surprisingly simple structure while not decreasing the utilities $\{c_{ik}\}$. Let the interval under consideration be $I_k = [m, \eta \cdot m)$ (or $[m, \eta \cdot m]$ if $k = K$) and let the \sgm{} projection of $\Omega^*$ be $\Omega_k$. Losing an additional factor $\eta$ in the approximation factor, we can assume that if $\mu_i(\sigma) \in I_k$, then the expected utility of agent $i$ is always $m$ if $i$ is selected. 

Note that if $\mu_i(\sigma) \in I_k$ and agent $i$ is selected, then in this scenario, $\max_j \mu_j(\sigma) \in [m, \eta \cdot m]$. Denote the interval $[m, \eta \cdot m]$ as $\hat{I}$, and $\eta \cdot m$ as $\hat{m}$. Assume now that when all agents have posterior mean at most $\hat{m}$, any agent whose posterior mean is in $\hat{I}$ can be selected, and this choice yields utility $m$ to that agent. Such a policy assumes an $\eta$-approximate utilitarian-welfare-maximizing receiver, and hence our final result will be bicriteria.

Clearly, whenever $\Omega_k$ selects an agent with posterior mean in $I_k$, all agents must have posterior means at most $\hat{m}$, so that the new policy can also be made to make the same selection. If we pretend the utility yielded to the agent is $m$, this is still within factor $\eta$ of the actual utility -- the modified policy has utilities $\hat{c}_{ik} \geq c_{ik} / \eta$.


\paragraph{Mathematical Program.} Focus on some $k$, and denote the relaxed \sgm{} projection of $\Omega^*$ as $\Omega_k^*$. This policy can be written as a distribution over \iss{}s as 
\[
\Omega_k^* = \sum_{\ell \ge 1} \rho_{\ell} \cdot \tau_{\ell},
\]
where each \iss{} $\tau_{\ell}$ is an independent mapping of the agents to signals, along with an associated \slr{}, and $\sum_{\ell \ge 1} \rho_{\ell} = 1$. For any \iss{} $\tau$ in the above summation, let $u_i(\tau)$ denote the utility of agent $i$ in this scheme, so that
\[
U_i(\Omega_k^*) = \sum_{\ell} \rho_{\ell} \cdot u_i(\tau_{\ell}).
\]
Note that given the \sgm{} structure, we only count the utility when the posterior mean of the selected agent lies in $\hat{I}$, in which case the contribution is assumed to be $m$.

Given signaling scheme $\tau$, let $B(\tau)$ denote the event that every agent $j$ has $\mu_j(\sigma) \le \hat{m}$, where $\sigma \sim \tau$ is an independent signal. Now define the following four probabilities for the signaling scheme $\tau$:
\begin{alignat*}{3}
Q_i(\tau) & := \Pr_{\sigma \sim \tau} [ \mu_i(\sigma) \le \hat{m} ];\\
Q(\tau) & := \Pr[B(\tau)] = \prod_{i=1}^n Q_i(\tau);\\
p_i(\tau) & := \Pr_{\sigma \sim \tau} [ \mu_i(\sigma) \in \hat{I} \mid B(\tau) ];\\
x_i(\tau) & := \Pr_{\sigma \sim \tau}[\text{agent } i \text{ is selected and }\mu_i(\sigma) \in \hat{I} \mid B(\tau) ].
\end{alignat*}
$Q(\tau) = \prod_{i=1}^n Q_i(\tau)$ comes from the assumption that in the \iss{} $\tau$, the \mpr{} of each agent is independent of all other agents. The utilities in policy $\tau$ clearly satisfy: 
\begin{equation}
\label{eq:obj}
    u_i(\tau) = m \cdot Q(\tau) \cdot  x_i(\tau), \qquad \forall i \in [n].
\end{equation}
The event $\mu_i(\sigma) \in \hat{I}$ conditioned on $B(\tau)$ follows Bernoulli$(1,p_i(\tau))$, so that  analogous to \cref{sec:network_flow}, the following polymatroid exactly captures the $\{x_i(\tau)\}_{i=1}^n$ in any feasible policy.

\begin{equation}
\label{eq:nonlinear}
     \sum_{i \in S} x_i(\tau) \leq 1 - \prod_{i \in S} (1- p_i(\tau)),  \qquad \forall S \subseteq [n].
\end{equation}


\paragraph{Structural Lemma.} The issue now is that the quantities $Q(\tau)$ and $p_i(\tau)$ depend on the policy $\tau$, so that we cannot yet reduce it to network flow in a way analogous to \cref{sec:network_flow}. Our key insight is that this dependence can be removed. Towards this end, we define a {\em maximal mapping} as follows.

\begin{definition}[Maximal Mapping] 
   For an interval $\hat{I}$, a \emph{maximal mapping} is a \mpr{} $\omega$ from agent values to signals $\{\sigma\}$ such that  $\Pr_{\sigma \sim \omega} [\mu_i(\sigma) \in \hat{I}]$ is maximized for each agent $i$.
\end{definition}

We first show the following structural lemma, and subsequently show how to compute the maximal mapping efficiently. 

\begin{lemma}
\label{lem:struct}
Given any \iss{} $\tau$ above, there is an \iss{} $\tau'$ that (I) uses a maximal mapping, (II) satisfies \cref{eq:obj,eq:nonlinear}, and (III) gives each agent $i$ a utility of at least $u_i(\tau)$ from the signals in which their posterior mean is inside $\hat{I}$, where the contribution to the utility is assumed to be $m$.
\end{lemma}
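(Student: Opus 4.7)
My plan is to build $\tau'$ by replacing each agent's mapping in $\tau$ with a carefully chosen maximal mapping $\omega'_i$, and then argue that the polytope of feasible utility vectors under $\tau'$ (described by \cref{eq:obj,eq:nonlinear}) contains $u(\tau)$. Granting this containment, part (III) follows because any point in such a polytope is attainable by some feasible $x$-vector and hence by a corresponding selection rule; parts (I) and (II) are then immediate from the construction.

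The first step is to pin down structural properties of a particular maximal $\omega'_i$, split by cases on the prior mean $\E[D_i]$. If $\E[D_i] \le \hat m$, I will invoke the standard Bayesian-persuasion splitting argument: any posterior of mean $\mu^*>\hat m$ can be paired with low-mean posterior mass to create a new posterior of mean $m\in\hat I$, strictly increasing $\Pr[\mu_i(\sigma)\in\hat I]$ unless no such low-mean mass exists (in which case all posterior means are already in $\hat I$). This yields the identity $\max\Pr[\mu_i(\sigma)\in\hat I] = \max\Pr[\mu_i(\sigma)\ge m]$ and lets me choose $\omega'_i$ with $\Pr[\mu_i(\sigma)\le\hat m]=1$. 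If instead $\E[D_i]>\hat m$, Bayes plausibility together with the support bound $[1,V]$ gives $\max\Pr[\mu_i(\sigma)\le\hat m] = (V-\E[D_i])/(V-\hat m)$, attained by a two-signal scheme placing posterior means at $\hat m$ and $V$; this same scheme also maximizes $\Pr[\mu_i(\sigma)\in\hat I]$, so $\omega'_i$ may be chosen with no mass on posteriors below $m$.

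The second step is to derive two monotonicity claims: (a) $Q_i(\tau')\ge Q_i(\tau)$ and (b) $p_i(\tau')\ge p_i(\tau)$ for every agent $i$. Claim (a) is immediate: when $\E[D_i]\le\hat m$ we have $Q_i(\tau')=1$, and when $\E[D_i]>\hat m$ the chosen maximal mapping attains the maximum possible value of $\Pr[\mu_i(\sigma)\le\hat m]$. For (b), the case $\E[D_i]>\hat m$ is trivial since $\omega'_i$ satisfies $\Pr[\mu_i(\sigma)\in\hat I]=\Pr[\mu_i(\sigma)\le\hat m]$, which forces $p_i(\tau')=1$. For $\E[D_i]\le\hat m$, writing $r_i := p_i(\tau)\,Q_i(\tau) = \Pr_{\tau}[\mu_i(\sigma)\in\hat I]$, the identity from step one gives $r_i + (1-Q_i(\tau)) \le p^*_i$, where $p^*_i := p_i(\tau')$; the inequality $(p^*_i-1)(1/Q_i(\tau)-1)\le 0$ (which holds because $p^*_i\le 1$ and $Q_i(\tau)\le 1$ give opposite signs) then converts this into $r_i/Q_i(\tau)\le p^*_i$, i.e., $p_i(\tau)\le p_i(\tau')$.

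Finally, multiplying \cref{eq:nonlinear} through by $m\,Q(\tau)$ and using \cref{eq:obj} rewrites the feasible utility vectors under any ISS as $\sum_{i\in S} u_i \le m\prod_j Q_j\,\bigl(1-\prod_{i\in S}(1-p_i)\bigr)$ for every $S\subseteq[n]$, together with $u_i\ge 0$. The right-hand side is coordinate-wise nondecreasing in each $Q_j$ and in each $p_i$, so claims (a) and (b) show that every polymatroid constraint of $\tau'$ is at least as loose as the corresponding constraint of $\tau$. Hence $u(\tau)$ lies in the polymatroid of feasible utilities for $\tau'$, and the desired $\tau'$ exists. I expect the main obstacle to be the structural step: carefully verifying that a maximal $\omega'_i$ can be chosen with no wasted probability mass outside $\hat I$ is where all the Bayes-plausibility content sits; the polymatroid-containment argument is then a largely mechanical corollary.
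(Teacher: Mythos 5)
Your proof is correct, and its high-level strategy matches the paper's: both arguments boil down to showing that passing to a maximal mapping weakly increases $Q_i$ and $p_i$ for every agent, so that each polymatroid constraint becomes looser and $u(\tau)$ remains feasible (attainable by a suitable selection rule under $\tau'$). The difference is in how the monotonicity facts $Q_i(\tau')\ge Q_i(\tau)$ and $p_i(\tau')\ge p_i(\tau)$ are derived. The paper does this constructively: it first consolidates each agent's signals into three representatives $\sigma^1_i,\sigma^2_i,\sigma^3_i$ (low/mid/high), then explicitly blends low and high mass into a new middle signal $\sigma^4_i$ (parameterized by $\alpha_i$), showing at each step that $Q_i$ does not decrease and $p_i$ does not decrease, before finally appealing to the maximal mapping in each of the three cases $\mu_i\in\hat I$, $\mu_i<m$, $\mu_i>\hat m$. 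You instead characterize the maximal mapping directly in each regime of $\E[D_i]$ and then prove the two inequalities with a compact algebraic step: the identity $\max\Pr[\mu_i\in\hat I]=\max\Pr[\mu_i\ge m]$ when $\E[D_i]\le\hat m$ gives $r_i+(1-Q_i(\tau))\le p^*_i$, which combined with $(p^*_i-1)(1/Q_i(\tau)-1)\le 0$ yields $p_i(\tau)\le p_i(\tau')$. Your route is somewhat cleaner and avoids the intermediate signal-merging construction, at the cost of leaning on a couple of Bayes-plausibility facts about maximal mappings (e.g.\ that the $Q_i$-maximizing two-signal scheme at $\{\hat m, V\}$ simultaneously maximizes $\Pr[\mu_i\in\hat I]$) that are asserted rather than fully spelled out; these are standard and correct, but deserve a sentence of justification if written up. Both approaches buy the same thing; yours is a mild streamlining of the same underlying argument.
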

\begin{proof}
Consider $\tau$ and the corresponding \mpr{} for agent $i$. Let $S^1_{i}$ be the set of signals $\sigma$ with the posterior mean $\mu_i(\sigma) < m$; $S^2_i$ be those with $\mu_i(\sigma) \in \hat{I}$; and $S^3_i$ those with $\mu_i(\sigma) > \hat{m}$. Create three signals $\sigma^1_i, \sigma^2_i, \sigma^3_i$. These are sent  whenever a signal in $S^1_i, S^2_i, S^3_i$ is sent, respectively. Note that $\mu_i(\sigma^1_i) = m^1_i < m$; $\mu_i(\sigma^2_i) = m^2_i \in \hat{I}$; and $\mu_i(\sigma^3_i) = m^3_i > \hat{m}$. Further,  $Q_i = \sum_{\sigma \in S^1_i \cup S^2_i} \Pr[\sigma] = \Pr[\sigma^2_i] + \Pr[\sigma^1_i]$ remains unchanged and similarly $p_i$ remains unchanged. This preserves both the constraints and the utilities.

For agent $i$, let $\alpha_i$ satisfy $\alpha_i m^1_i + (1-\alpha_i) m^3_i = m^2_i$, and define $s^1_i := \Pr[\sigma^1_i]$ and $s^3_i := \Pr[\sigma^3_i]$. Create a new signal $\sigma^4_i$, and do the following things:
\begin{itemize}
    \item If $\beta = \frac{s^1_i (1-\alpha_i)}{s^3_i \alpha_i} \le 1$, then whenever signal $\sigma^1_i$ was sent,  $\sigma^4_i$ is sent instead, and whenever $\sigma^3_i$ was sent, $\sigma^4_i$ is sent instead with probability $\beta$ and $\sigma^3_i$ is sent with probability $1-\beta$. 
    \item If $\beta > 1$, then whenever signal $\sigma^3_i$ was sent, $\sigma^4_i$ is sent instead, and whenever $\sigma^1_i$ was sent, $\sigma^4_i$ is sent instead with probability $1/\beta$ and $\sigma^1_i$ sent with probability $1-1/\beta$.
\end{itemize}
In either case, we note that $\mu_i(\sigma^4_i) = \alpha_i m^1_i + (1-\alpha_i) m^3_i = m^2_i \in \hat{I}$. The sender can then send signal $\sigma^2_i$ whenever $\sigma^4_i$ is sent, noting that this preserves the condition that $\E[\mu_i(\sigma^2_i)] \in \hat{I}$. 

Note that in either case, $\Pr[\sigma^3_i]$ does not increase, and hence $Q_i = \Pr[\sigma^1_i] + \Pr[\sigma^2_i]$ does not decrease. Since the increase in $\Pr[\sigma^2_i]$ is exactly the decrease in the sum $\Pr[\sigma^1_i] + \Pr[\sigma^3_i]$, we know that $p_i$ does not decrease. Therefore, the solution $\{x_i(\tau)\}_{i=1}^n$  remains feasible for \cref{eq:nonlinear} with the variables $\{p_i\}_{i=1}^n$ for the new mapping, and the right-hand side in \cref{eq:obj} does not decrease as well, since $Q = \prod_{i=1}^n Q_i$ does not decrease. Therefore, the utility $u_i(\tau)$ does not decrease. \Cref{fig:reduction} illustrates the reduction process for the two different cases discussed above.

For any agent $i$, the above process yields two signals $\sigma^{a}_i, \sigma^b_i$, where $\mu_i(\sigma^a_i) \in \hat{I}$. Recall that $\mu_i = \E[D_i]$ is the prior mean, and is a convex combination of $\mu_i(\sigma^a_i)$ and $\mu_i(\sigma^b_i)$. There are three cases now.
\begin{enumerate}
\item If $\mu_i \in \hat{I}$, then not revealing anything yields $Q_i = p_i = 1$. Keeping $x_i(\tau)$  the same, this preserves feasibility of \cref{eq:nonlinear}. Since the right-hand side of \cref{eq:obj} does not decrease, the utility $u_i(\tau)$ does not decrease. This is also the mapping $\sigma^a_i$ that maximizes $\Pr[\sigma^a_i]$. 
\item If $\mu_i < m$, then $\mu_i(\sigma^b_i) < m$, which means $Q_i = 1$, and $p_i = \Pr[\sigma^a_i]$. If we use the mapping that maximizes $\Pr[\sigma^a_i]$ and keeping $x_i(\tau)$ the same, then this preserves the feasibility of \cref{eq:nonlinear} and does not decrease the RHS of \cref{eq:obj}, hence showing the utility $u_i(\tau)$ does not decrease. 
\item If $\mu_i > \hat{m}$, then  $\mu_i(\sigma^b_i) > \hat{m}$. In this case, $p_i = 1$, and $Q_i = \Pr[\sigma^a_i]$. Again, using the mapping that maximizes $\Pr[\sigma^a_i]$ and keeping $x_i(\tau)$ the same, this preserves the feasiblity of \cref{eq:nonlinear} and does not decrease the RHS of \cref{eq:obj}, showing the utility $u_i(\tau)$ does not decrease.
\end{enumerate}

Therefore, the resulting solution uses a maximal mapping and preserves \cref{eq:obj,eq:nonlinear}.
\end{proof}

\begin{figure}[htbp]
    \centering
    \includegraphics[scale = 0.8] {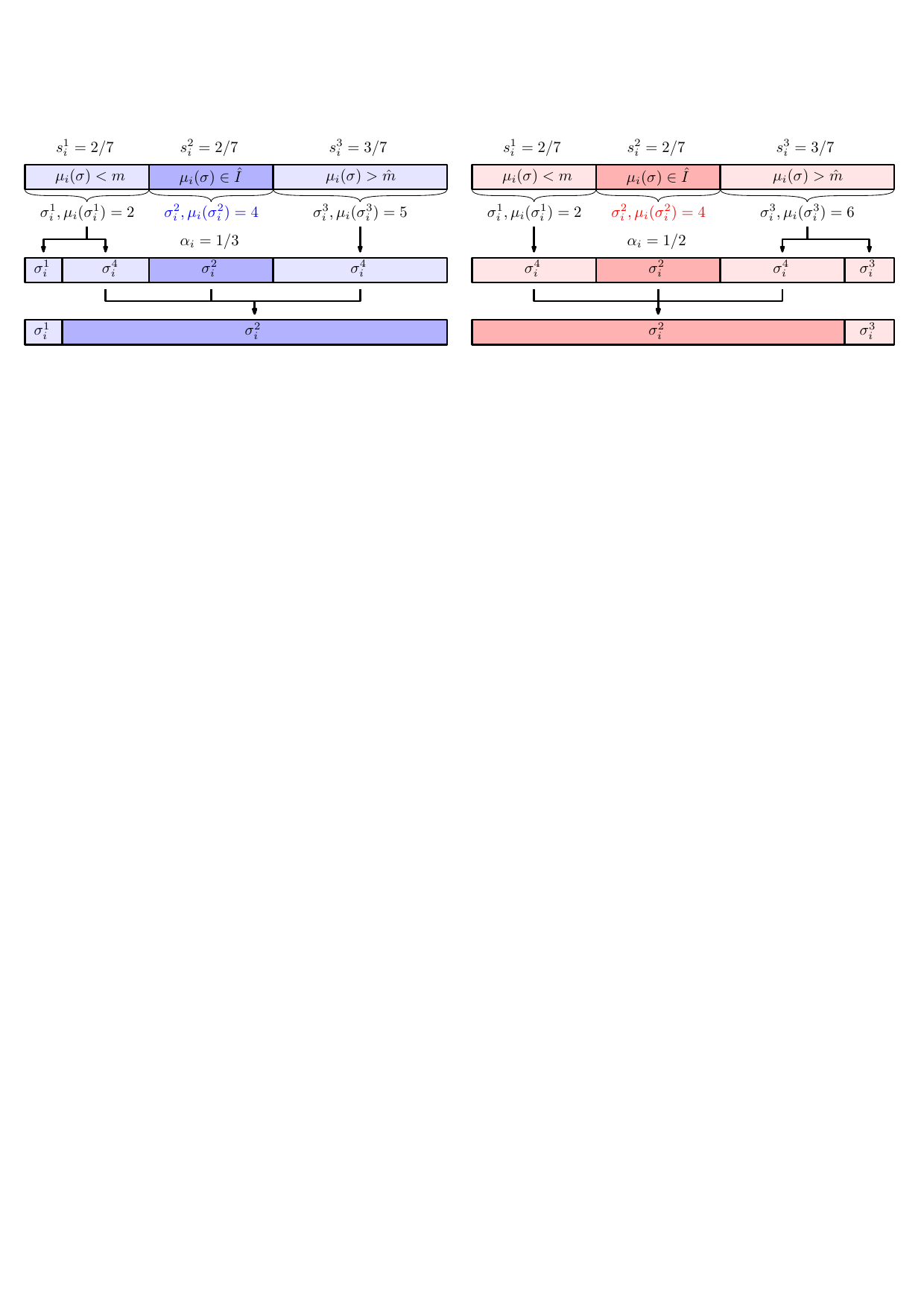}
    \caption{Illustration of the reduction process for two instances.  In both instances, we have $s_i^1 = 2/7$, $s_i^2 = 2/7$, $s_i^3 = 3/7$; with $\mu_i(\sigma_i^1) = 2$ and $\mu_i(\sigma_i^2) = 4$. In the blue (left) instance, $\mu_i(\sigma_i^3) = 5$, so $\alpha_i = 1/3$. We absorb all the probability mass from $\sigma_i^3$ (i.e., $3/7$) together with $3/14$ probability of $\sigma_i^1$ into $\sigma_i^2$. In the red (right) instance, $\mu_i(\sigma_i^3) = 6$ thus $\alpha_i = 1/2$. We absorb all the probability mass of $\sigma_i^1$ (i.e., $2/7$) with the same amount of probability mass in $\sigma_i^3$ into $\sigma_i^2$. }
    \label{fig:reduction} 
\end{figure}

\subsection{Reduction to Network Flow and Approximate Majorization} 
For any agent $i$, the maximal mapping (that maximizes $\Pr[\sigma^a_i]$) is the solution to a linear program. Let $y_v$ denote the probability that the value $v$ maps to a signal in $\hat{I}$, and we have the following linear program.
\begin{equation}
\label[Program]{prog:pQ}
\begin{alignedat}{3}
    & \text{maximize: } & \sum_v y_v & \\
    & \text{subject to: } \quad & \sum_v v \cdot y_v & \geq m \cdot \sum_v y_v; \\
     && \sum_v v \cdot y_v & \leq \hat{m} \cdot \sum_v y_v; \\
     && 0 \leq y_v & \leq \Pr[v], \qquad & \forall v.
\end{alignedat}
\end{equation}

Let the optimal objective value of \cref{prog:pQ} be $\beta_i$. If $\mu_i < m$, then we set $Q_i = 1$ and $p_i = \beta_i$; if $\mu_i > \hat{m}$, then we set $p_i = 1$ and $Q_i = \beta_i$; else we set $p_i = Q_i = 1$. Again, let $Q = \prod_i Q_i$.

We now rewrite the constraints and objective for the \iss{} $\tau$ as follows. Note that $\{p_i\}_{i = 1}^n$ and $Q$ are calculated as above, and do not depend on the policy. 

\begin{equation}
\label[Program]{prog:pmid0}
\begin{alignedat}{3}
     m \cdot Q \cdot x_i(\tau) & =  u_i(\tau), & \qquad & \forall i \in [n]; \\
     \sum_{i \in S} x_i(\tau) & \leq 1 - \prod_{i \in S} (1- p_i(\tau))  & \qquad & \forall S \subseteq [n]. \\
\end{alignedat}
\end{equation}


Note that $\Omega_k^* = \sum_{\ell \ge 1} \rho_{\ell} \cdot \tau_{\ell}$, and after our transformation above, each $\tau_{\ell}$ now faces the same constraints in \cref{prog:pmid0}. Therefore, we can set $x_i = \sum _{\ell \ge 1} \rho_{\ell} \cdot x_i(\tau_{\ell})$, which will be feasible in \cref{prog:pmid0}. 
In order to reduce to network flow, we will relax the above program to the following:

\begin{equation}
\label[Program]{prog:pmid}
\begin{alignedat}{3}
     m \cdot Q \cdot x_i(\tau) & =  u_i(\tau), & \qquad & \forall i \in [n]; \\
     \sum_{i = 1}^n x_i(\tau) &\leq 1; \\
     0 \le x_i(\tau) &\leq p_i, & \qquad & \forall i \in [n]. \\
\end{alignedat}
\end{equation}

Recall from \cref{sec:reduction} that we randomize over the projections $\{\Omega^*_k\}_{k = 1}^K$ of the optimal \sgp{} $\Omega^*$, while preserving all utilities to a factor of $K$. For every $k \in \{1,2,\ldots,K\}$, consider the relaxed \sgm{} policy constructed above. For notational convenience, we denote the $p_i, Q_i$ in the corresponding maximal mapping as $p^k_i, Q^k_i$. Further denote the lower end-point of $I_k$ as $m^k$, and the variable $x_i$ by $x_i^k$. Putting together \cref{prog:pmid} for different values of $k$, it holds that the utilities $\{u_i\}_{i = 1}^n$ of the policy $\Omega^*$, after losing the factor of $K$, satisfy the following constraints.
\begin{equation}
\label[Program]{prog:pmaj}
\begin{alignedat}{3}
    \frac{1}{K} \cdot \sum_{k=1}^K m^k \cdot Q^k \cdot x^k_i & =  u_i, & \qquad & \forall i \in [n]; \\
    \sum_{i =1}^n x^k_i &\leq 1, & \qquad & \forall k \in [K];\\
    0 \leq x^k_i &\leq p^k_i, & \qquad & \forall i \in [n], k \in [K].
\end{alignedat}
\end{equation}




\paragraph{Network Flow and Majorization.} Define $b^k := m^k \cdot Q^k / K$ and $y_i^k := b^k \cdot x^i_k$, and we have
\begin{equation}
\label[Program]{prog:pmaj3}
\begin{alignedat}{2}
\sum_{k=1}^K y^i_k & =  u_i, & \qquad & \forall i \in [n]; \\
\sum_{i=1}^n y^k_i &\leq b^k, & \qquad & \forall k \in [K]; \\
0 \leq y^k_i &\leq p^k_i \cdot b^k, & \qquad & \forall i \in [n], k \in [K].
\end{alignedat}
\end{equation}
By the same argument as in \cref{sec:full_maj}, the constraints above formulate a network flow problem. The agents are the sinks. Each interval $\hat{I}_k$ is an intermediate node that is connected to the source with edges of capacity $b^k$. The node $\hat{I}_k$ is connected to sink $t_i$ with an edge of capacity $p_i^k \cdot b^k$. 
By \cref{thm:bern_main}, this network flow problem has a $1$-majorized solution (which is computable in polynomial time according to the discussion in \cref{sec:network_flow}). Consequently, there exist utilities $\{\hat{u}_i\}_{i = 1}^n$ that are $1$-majorized. 


\paragraph{Final Algorithm.} Our final algorithm will round the majorized solution analogous to \cref{sec:full_maj}.  
\begin{itemize}
\item For each $k \in \{1,2,\ldots,K\}$, consider the relaxed interval $\hat{I}_k = [m_k, \eta \cdot m_k)$, and compute the quantities $\{p^k_i,Q^k_i\}$ using \cref{prog:pQ}.
\item Compute the $1$-majorized solution $\hat{x}$ to  \cref{prog:pmaj}. 
\item Pick a number $k \in \{1,2,\ldots,K\}$ uniformly at random.
\item Given the realized values, compute the corresponding signals via the maximal mapping for $\hat{I}_k$ and send the signals to the receiver. The receiver selects an agent as follows:
\begin{itemize}
\item If any agent has posterior mean larger than the interval $\hat{I}_k$, select the agent with the highest mean and stop.
\item Consider the agents whose posterior mean lies in $\hat{I}_k$ in a uniformly random order. For each agent $i$ in this order, if it is reached, select it with probability $\hat{x}^k_i / p^k_i$ and stop.
\item If no agent is selected so far, select the agent with the highest posterior mean.
\end{itemize}
\end{itemize}

\paragraph{Analysis.} 
The same analysis as \cref{sec:full_maj} shows that for each agent $i$, the utility in the above algorithm is at least $\hat{u}_i / 2$, where $\hat{u}_i$ is the utility in the $1$-majorized solution to \cref{prog:pmaj}. To see this, note that the probability of selecting the interval $K$ is $1 / K$. Conditioned on this event, the probability that no agent has posterior mean larger than $\hat{I}_k$ is $Q^k$. Conditioned on these two events, the probability that the posterior mean of agent $i$ both lies in $\hat{I}_k$ and is selected is at least $\hat{x}^k_i / 2$, and in this case, it yields utility at least $m^k$. The overall utility bound now follows by linearity of expectation. 

Randomizing over \sgm{} projections (see \cref{sec:reduction}) loses a factor of $O(K) = O((\log V) / \epsilon)$ in each utility, and hence our method gives a bicriteria $O((\log V) / \epsilon)$-majorized solution, completing the proof of \cref{thm:approx}.

\subsection{Extension to Asymmetric Utilities}
\label{sec:asym}
So far, we have assumed that the receiver's selection criterion is based on the agents' signaled values, and the value of the chosen agent is the welfare the system generates. The work of~\cite{AuKawai,Du2024} considers a slightly different model, where though the agents signal their values and the receiver selects the agent whose mean posterior signaled value is largest, an individual agent's utility is $1$ if selected and $0$ otherwise. This corresponds to the utility perceived by the selfish agent, as opposed to the welfare generated by the system. We observe that the proofs for \cref{thm:approx,thm:full3} extends as is to this model, and sketch the changes needed. 

To see that the proof of \cref{thm:approx} extend as is, we keep the construction of the mathematical program in  \cref{sec:structure} the same, except we modify \cref{eq:obj} to
\begin{equation}
\label{eq:obj10}
    u_i(\tau) = Q(\tau) \cdot  x_i(\tau), \qquad \forall i \in [n].
\end{equation}
which captures the utility being the same as the probability of being selected. For this modification, the proof of \cref{lem:struct} remains unchanged. Subsequently, \cref{prog:pQ} remains unchanged, while we change \cref{prog:pmaj} to

\begin{equation}
\label[Program]{prog:pmaj2}
\begin{alignedat}{3}
    \frac{1}{K} \cdot \sum_{k=1}^K  Q^k \cdot x^k_i & =  u_i, & \qquad & \forall i \in [n]; \\
    \sum_{i =1}^n x^k_i &\leq 1, & \qquad & \forall k \in [K];\\
    0 \leq x^k_i &\leq p^k_i, & \qquad & \forall i \in [n], k \in [K].
\end{alignedat}
\end{equation}

This again reduces to network flow by defining $b^k := Q^k / K$ and $y_i^k := b^k \cdot x^i_k$, yielding exactly \cref{prog:pmaj3}. This completes the proof of \cref{thm:approx} for this setting. 

Analogously, the proof of \cref{thm:full3} extends to this setting by changing the relaxation to replace the value $v_j$ by $1$ in the first (utility) constraint \cref{eqn:constraint_1}.  It is easy to check that the resulting program this also reduces to network flow. The rest of the analysis goes through as is, except for replacing $v_j$ by $1$ in \cref{eqn:utility1}. This shows a $2$-majorized solution.
\section{Lower Bound on Majorization\texorpdfstring{: Proof of \cref{thm:main_lb}}{}}
\label{sec:main_lb}


We first assume that the receiver is maximizing the utilitarian welfare exactly, and it will become clear that the same lower bound holds if the receiver selects an agent using an \emph{arbitrary} function of the posterior means -- in particular, if it is a $(1+\epsilon)$-approximate utilitarian welfare maximizer.

Let there be $n$ agents. Each agent $i$ has value $i + 1$ with probability $1/i$ and has value $1$ with probability $1-1/i$. Note that the expected value of agent $i$ is $\mu_i = 2$, the same for each $i$. For any $k \in [n]$, denote $\sum_{i = k} ^ n 1 / (i + 1)$ by $H_{-k}$. We first give a lower bound on the sum of the smallest $k$ agents' utilities.

\begin{lemma} \label{lem:pfx_sum_lb}
    For any $k \in [n]$, there exists a \sgp{} $S_k$, such that if the utilities of the agents are arranged in ascending order, the prefix of the first $k$ utilities sums to \[
    \frac{1}{2} \cdot \frac{k \cdot (k + 3)}{k + (k + 1) \cdot H_{-(k + 1)}}.
    \]
\end{lemma}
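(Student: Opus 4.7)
The plan is to exhibit $S_k$ explicitly as a binary-signal policy in which agent $1$ is non-revealing (its posterior mean stays at $2$), each agent $i \in \{2, \ldots, k\}$ fully reveals its value, and each agent $i \in \{k+1, \ldots, n\}$ sends a ``high'' signal (which only ever occurs when $v_i = i+1$) with a carefully chosen marginal probability $r_i \in [0, 1/i]$. The receiver's greedy rule then becomes transparent: a ``high'' signal from agent $i$ induces posterior mean $i+1 > 2$, while a ``low'' signal induces posterior mean strictly less than $2$ (for $i \geq 2$ and $r_i > 0$). Hence the largest-indexed ``high'' signaler wins; if no one signals high, agent $1$ wins, and no ties ever arise.

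Using the telescoping identity $\prod_{j=2}^k (1 - 1/j) = 1/k$ and writing $P := \prod_{j=k+1}^n (1 - r_j)$, a direct calculation gives $U_1 = 2P/k$ and $U_j = (j+1)P/k$ for $j \in \{2, \ldots, k\}$. I would then choose $r_{k+1}, \ldots, r_n$ (recursively from $i = n$ down) so that $U_i = u^* := (k+1)P/k$ for all $i > k$. The value of $P$ is pinned down by the partition identity that the ``agent $i$ wins'' events across all $i$ have probabilities summing to $1$, namely
\[
\frac{P}{k} + (k-1) \cdot \frac{P}{k} + u^* \cdot H_{-(k+1)} = 1,
\]
giving $P = k/[k + (k+1)H_{-(k+1)}]$. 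Because $u^* = U_k$, the $k$ smallest utilities are exactly $U_1, \ldots, U_k$, and they sum to $\sum_{j=1}^k (j+1)P/k = P(k+3)/2 = k(k+3)/[2(k + (k+1)H_{-(k+1)})]$, matching the target.

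The one remaining step, which I expect to be the main source of bookkeeping, is verifying feasibility $r_i \in [0, 1/i]$ for every $i > k$. Writing $Q_i := \prod_{j \geq i}(1-r_j)$, the equations $Q_{i+1} - Q_i = u^*/(i+1)$ give a closed form for $Q_i$, and the condition $r_i \leq 1/i$ reduces to the elementary inequality $(k+1) \sum_{j=k+1}^i \frac{1}{j+1} \geq \frac{i-k}{i+1}$; this holds because the left-hand side has $i-k$ summands, each at least $1/(i+1)$, so it is bounded below by $(k+1)(i-k)/(i+1) \geq (i-k)/(i+1)$. Once feasibility is confirmed, the claimed bound on the sum of the $k$ smallest utilities follows immediately from the utility calculation above.
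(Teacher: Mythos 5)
Your construction is the same one the paper uses, up to a change of variables: your marginal probability $r_i$ is the paper's conditional probability $x_i$ divided by $i$, and the winner-determination rule, the utility formulas $U_j = (j+1)P/k$ for $j\le k$ and $U_j = u^* = (k+1)P/k$ for $j>k$, and the resulting prefix sum all coincide. The presentational difference is that the paper writes down $x_i$ in closed form and verifies $U_j$ by an explicit telescoping product, whereas you define the $r_i$ implicitly via the telescoping recurrence $Q_{i+1}-Q_i = u^*/(i+1)$ and pin down $P$ by the partition-of-unity identity $\sum_i \Pr[\text{agent }i\text{ wins}] = 1$; you also explicitly verify the feasibility constraint $r_i \le 1/i$ (equivalently, $x_i\le 1$), a step the paper leaves implicit. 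These are cosmetic differences; the underlying argument is the same.
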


In order to prove this lemma, we construct the signaling policy $S_k$ as follows. For each agent $i$ with $i \in \{k+1, \ldots, n\}$, if $v_i = i + 1$, then with probability $x_i$, the sender sends a signal $s_i$ to the receiver, and otherwise the sender sends a signal $\bar s_i$ for this agent. For $i \le k$, the sender reveals the true value of agent $i$. Note that this \mpr{} is independent for each agent.

Note that conditioned on receiving signal $s_i$, we have $\mu_i(s_i) = i + 1$.  On the other hand, conditioned on receiving $\bar s_i$, we have 
\[
\mu_i(\bar s_i) = \frac{\left(1 - \frac{1}{i}\right) \cdot 1 + \frac{1}{i} \cdot (1 - x_i) \cdot (i + 1)}{1 - \frac{1}{i} \cdot x_i} 
= \frac{2 - x_i - \frac{x_i}{i}}{1 - \frac{x_i}{i}} \leq 2.
\]
Therefore, if at least one signal $s_i$ is received, the receiver will select the agent $i$, $i \in [k+1,n]$ with the largest index. On the other hand, if no signal $s_i$ is received, the receiver selects the unique agent $i \in [k]$ with the highest revealed value.

For each $i \in \{k + 1, \ldots, n\}$, we set
\[
x_i = \frac{\frac{i}{i + 1}}{\frac{k}{k + 1} + H_{-(k+1)} - H_{-(i+1)}}.
\]
The intuition behind this formula is to have $U_i = U_{i+1}$ for any $k\le i \le n-1$.  We obtain the utility of all agents as follows.
\begin{lemma} \label{lem:sp_utilities}
It holds that
    \begin{equation*}
    U_j(S_k) = \begin{cases}
        \frac{j + 1}{k + 1} \cdot \frac{1}{\frac{k}{k + 1} + H_{-(k + 1)}} & j \le k;\\
        \frac{1}{\frac{k}{k + 1} + H_{-(k + 1)}} & j \ge k + 1.
    \end{cases} 
    \end{equation*}
\end{lemma}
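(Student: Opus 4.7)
The plan is to compute $U_j(S_k)$ directly from the definition by carefully enumerating which signal realizations lead to agent $j$ being selected, and then simplifying the resulting products via a telescoping identity tied to the specific choice of $x_i$. The approach splits into two cases: $j \ge k+1$ and $j \le k$, with the $j \le k$ case requiring an additional step to rule out that agents with revealed value $1$ get selected.

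First I would set up notation: let $A_i := \frac{k}{k+1} + H_{-(k+1)} - H_{-(i+1)}$ for $i \ge k$, so that $x_i = \frac{i/(i+1)}{A_i}$ and $A_k = \frac{k}{k+1}$, $A_n = \frac{k}{k+1} + H_{-(k+1)}$. The crucial identity is the telescoping relation $A_i - A_{i-1} = \frac{1}{i+1}$ for $i \ge k+1$, which gives
\[
1 - \frac{x_i}{i} \;=\; 1 - \frac{1/(i+1)}{A_i} \;=\; \frac{A_{i-1}}{A_i}, \qquad i \ge k+1,
\]
so that $\prod_{i=j+1}^{n}(1 - x_i/i) = A_j / A_n$.

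For the case $j \ge k+1$, agent $j$ is selected exactly when the signal $s_j$ is sent (contributing posterior mean $j+1$) and no agent with higher index sends its $s$-signal. Multiplying $\mu_j(s_j) = j+1$ by $\Pr[s_j] = x_j/j$ and the above telescoping product yields $U_j(S_k) = (j+1) \cdot \frac{1/(j+1)}{A_j} \cdot \frac{A_j}{A_n} = 1/A_n$, which matches the claim.

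For the case $j \le k$, I would first argue that whenever $v_j = 1$, agent $j$ is \emph{not} selected: a brief computation shows $\mu_i(\bar s_i) > 1$ for all $i > k$ with $x_i > 0$ (since the numerator/denominator expressions force this strict inequality), so a posterior of $1$ can never win. Hence the only contribution to $U_j$ comes from the event $\{v_j = j+1\}$, where the posterior mean is $j+1$. In that event, $j$ wins iff no agent $i \in (j,k]$ has $v_i = i+1$ (beating $j$) and no agent $i > k$ sends $s_i$. The probability of this event factorizes as
\[
\frac{1}{j} \cdot \prod_{i=j+1}^{k}\!\left(1 - \frac{1}{i}\right) \cdot \prod_{i=k+1}^{n}\!\left(1 - \frac{x_i}{i}\right)
\;=\; \frac{1}{j} \cdot \frac{j}{k} \cdot \frac{A_k}{A_n}
\;=\; \frac{1}{(k+1)A_n},
\]
where the middle product telescopes elementarily and the last uses $A_k = k/(k+1)$. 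Multiplying by $j+1$ gives $U_j(S_k) = \frac{j+1}{(k+1)A_n}$, matching the claimed formula after substituting $A_n = \frac{k}{k+1} + H_{-(k+1)}$.

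The main obstacle I anticipate is the $j \le k$ case, specifically the need to handle the $v_j = 1$ subcase carefully: one must verify that agents $i > k$ always have posterior strictly above $1$ under the signal $\bar s_i$ (ruling out that ties break in $j$'s favor and inflate $U_j$). The rest of the argument is a mechanical application of the telescoping identity, which is why the particular form of $x_i$ was chosen.
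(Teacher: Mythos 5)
Your proof is correct and follows essentially the same route as the paper: compute each $U_j(S_k)$ directly from the definition of $S_k$ and exploit the telescoping identity built into the choice of $x_i$ (your $A_i$ notation makes this telescoping a bit cleaner to state, but it is the same cancellation the paper performs). The only small wrinkle is in the $j\le k$ case: the paper silently rules out agent $j$ winning with revealed value $1$ by noting agent $1$ always reveals value $2$, whereas you instead argue $\mu_i(\bar s_i) > 1$ — note that this inequality follows from $x_i < 1$ (not from $x_i > 0$), so you would want to verify $x_i \le 1$ for your phrasing to be airtight; both justifications lead to the same conclusion.
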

\begin{proof}
   By the construction of $S_k$, for any $k + 1 \le j\le n$, the receiver will receive signal $s_j$ and no signal $s_i$ for $i > j$ with probability 
    \[
    x_j \cdot \frac{1}{j} \cdot \prod_{i=j+1}^{n} \left(1-\frac{x_i}{i} \right).
    \]
    Since agent $j$ receives utility $j + 1$ if and only if signal $s_j$ is sent and no signal $s_i$ for $i > j$ is sent, we have 
    \begin{align*}
    U_j(S_k) & =  (j + 1) \cdot x_j \cdot \frac{1}{j} \cdot \prod_{i = j + 1}^{n} \left(1-\frac{x_i}{i} \right) = \frac{j + 1}{j} \cdot x_j \cdot \prod_{i = j + 1}^{n} \left(1 - \frac{\frac{1}{i + 1}}{\frac{k}{k + 1} + H_{-(k + 1)} - H_{-(i + 1)}}\right) \\
    & = \frac{1}{\frac{k}{k + 1} + H_{-(k + 1)} - H_{-(j + 1)}} \cdot \prod_{i = j + 1}^{n} \left( \frac{\frac{k}{k + 1} + H_{-(k + 1)} - H_{-(i + 1)} - \frac{1}{i + 1}}{\frac{k}{k + 1} + H_{-(k + 1)} - H_{-(i + 1)}} \right)\\
    & = \frac{1}{\frac{k}{k + 1} + H_{-(k + 1)} - H_{-(j + 1)}} \cdot \prod_{i = j + 1}^{n} \left( \frac{\frac{k}{k + 1} + H_{-(k + 1)} - H_{-i}}{\frac{k}{k + 1} + H_{-(k + 1)} - H_{-(i + 1)}} \right) = \frac{1}{\frac{k}{k + 1} + H_{-(k + 1)}}.
    \end{align*}
    Next we consider agent $j$ for $j \le k$. By a similar telescoping as above, the probability that no signal $s_i$ with $i > k$ is sent (denoted by $q_R$) is 
    \[
        q_R = \prod_{i=k + 1}^{n} \left(1 - \frac{\frac{1}{i + 1}}{\frac{k}{k + 1} + H_{-(k + 1)} - H_{-(i + 1)}} \right) =  \frac{\frac{k}{k + 1} + H_{-(k + 1)} - H_{-(k + 1)}}{\frac{k}{k + 1} + H_{-(k + 1)}} = \frac{\frac{k}{k + 1}}{\frac{k}{k + 1} + H_{-{(k + 1)}}}.
    \]
    For any agent $j$ with $j \leq k$, the probability that it has the largest revealed value among agents $\{1, 2, \ldots, k\}$ is
    \[
    \frac{1}{j} \cdot \prod_{i = j + 1}^{k} \left(1 - \frac{1}{i}\right) = \frac{1}{k}.
    \]
    Therefore, the expected utility of agent $j$ is
    \[
    U_j(S_k) = \frac{j + 1}{k} \cdot q_R = \frac{j + 1}{k} \cdot \frac{\frac{k}{k + 1}}{\frac{k}{k + 1} + H_{-(k + 1)}} = \frac{j + 1}{k + 1} \cdot \frac{1}{\frac{k}{k + 1} + H_{-(k + 1)}}. \qedhere
    \]
\end{proof}

\begin{proof} [Completing the proof of \cref{lem:pfx_sum_lb}]
    We apply the signaling scheme $S_k$ defined above. By \cref{lem:sp_utilities}, we have $U_i(S_k) \le U_j(S_k)$ for all $1\le i\le j \le n$. Thus the $k$ agents with smallest expected utilities are agents $1, 2,\ldots, k$. Their utilities sum up to
    \[
    \sum_{i=1}^k U_i(S_k) = \frac{1}{2} \cdot \frac{k \cdot (k + 3)}{k + (k + 1) \cdot H_{-(k + 1)}},
    \]
    completing the proof.
\end{proof}

\begin{proof} [Proof of \cref{thm:main_lb}]
    Since $k \ge 1$, we have
    \[
    \frac{1}{2} \cdot \frac{k \cdot (k + 3)}{k + (k + 1) \cdot H_{-(k + 1)}} > \frac{1}{2} \cdot \frac{k \cdot (k + 3)}{(k + 1) + (k + 1) \cdot H_{-(k + 1)}} \ge  \frac{1}{2} \cdot \frac{k + 1}{1 + H_{-(k + 1)}} =: R_k.
    \]
    The quantity $R_k$ is a lower bound on the sum of the $k$ smallest utilities in the \sgp{} $S_k$. Fix any \sgp{} $S'$. Suppose that each agent $i$ is selected with probability $w_i$ in $S'$. If $S'$ is $\alpha$-majorized, the sum of the utilities of agents in the set $\{1, 2, \ldots, k\}$ must be at least $R_k / \alpha$. Therefore, we have the following set of inequalities
    \begin{equation*}
       \sum_{i=1}^k (i + 1)\cdot w_i \ge \frac{1}{\alpha} \cdot R_k, \qquad \forall k \in \{1,2, \ldots, n\}.
    \end{equation*}

    We now want to lower bound $\sum_{i=1}^n w_i$. The optimal way to assign the probabilities $\{w_i\}_{i = 1}^n$ is to make all the inequalities satisfied with equality. Therefore, we have
    \begin{align*}
        \sum_{i = 1}^n w_i &\ge \frac{1}{2\alpha} \cdot R_1 + \frac{1}{\alpha} \cdot \sum_{i = 2}^n \left(\frac{R_{i} - R_{i - 1}}{i + 1} \right)\\
        & \ge \frac{1}{\alpha} \cdot \sum_{i = 1}^n \frac{R_i}{(i + 1) \cdot (i + 2)} = \frac{1}{2\alpha} \cdot \sum_{i = 1}^n \frac{1}{(i + 2) \cdot (1 + H_{-(i + 1)})} \\ 
        & \ge \frac{1}{2\alpha} \cdot \sum_{i = 1}^n \frac{1}{(i + 2)\cdot (1 + \log (n + 1)- \log (i + 1))} \tag{Since $H_{-(i + 1)} \le \log \left(\frac{n + 1}{i + 1}\right)$} \\
        & \ge \frac{1}{2\alpha} \cdot \sum_{i = 1}^n \frac{1}{(\frac{3}{2} i + \frac{3}{2}) \cdot (1 + \log (n + 1) - \log (i + 1))}\\
        & \ge \frac{1}{3\alpha} \cdot \int_{1}^{n+1} \frac{1}{x\cdot(1 + \log (n+1) - \log x)} \d x \tag{Since $\frac{1}{x\cdot (1 + \log(n + 1) - \log x)}$ is monotonically decreasing on [1, n+1]}\\
        & = \frac{1}{3\alpha} \cdot \left[ -\log(1+\log(1 + n) - \log x)\Big\vert_{x = 1}^{n + 1} \right]  = \frac{1}{3\alpha} \cdot \log(1 + \log(1 + n)). 
    \end{align*} 
    Combining the inequality above with $\sum_{i=1}^n w_i \le 1$, we have
    \[
    \alpha \ge \frac{\log{(1 + \log {(1 + n)})}}{3} > \frac{1}{3} \cdot \log \log (n + 1) = \frac{1}{3} \cdot \log \log V. \qedhere
    \]
\end{proof}

Since the second part of our argument only requires the fact that at most one agent is selected and is independent of the receiver's \slr{}, this lower bound also applies if the receiver can arbitrarily select the agent, or (in particular) if the receiver is an approximate utilitarian welfare maximizer.

\section{Open Directions}
Our work points to several interesting future directions. First, we assume independent (or decentralized) mapping of values to signals, motivated hiring and selection applications. Can a similar existence result as in \cref{sec:approx} extend to the setting where the sender can correlate signals from different agents? We note however that the two settings are somewhat incomparable, and it may very well be that the correlated setting is computationally simpler for a given fairness function~\cite{dughmi2016algorithmic}, while the independent case is easier from an approximate majorization perspective. 

Secondly, our lower bound in \cref{sec:main_lb} holds for distributions with large variance. Is there an $O(1)$-majorized policy under a more benign assumption on distributions, such as the monotone hazard rate (MHR) assumption?   Next, we assumed a single agent is finally selected. What if the receiver selects the top $k$ agents according to the posterior mean?  In this case, it is open if there a polynomial-time algorithm for a given fairness function, in both the cases where the sender can correlate signals and when it sends independent signals. Further, it is an open question to extend our majorization result even to the case when $k = 2$.

Finally, we assume agents are not strategic in revealing information, and our results can be viewed as the limits of fairness that is achievable even if agents follow a prescribed policy. It is an interesting direction to study equilibria and price of anarchy when agents reveal information strategically, building on \cite{AuKawai,Du2024}.

\bibliographystyle{alpha}
\bibliography{refs}


\end{document}